\def\fl{\left\lfloor \frac{d-1}{2} \right\rfloor}
\def\del{\delta}
\def\lam{\lambda}
\long\def\comment#1{}
\def\R{{\mathbb{R}}}
\def\Pmax{P_{\max}}
\def\PNB{P_{\text{NB}}}
\def\Preg{P_{\text{reg}}}
\def\NB{\mathrm{NB}}
\newcommand\red[1]{\textcolor{red}{#1}}
\newcommand\blue[1]{\textcolor{blue}{#1}}
\newcommand\green[1]{\textcolor{ForestGreen}{#1}}
\newcommand{\seclab}[1]{\label{sec:#1}}                     
\newcommand{\tablab}[1]{\label{tab:#1}}                         
\newcommand{\SEC}[1]{Section \ref{sec:#1}}       
\newcommand{\TAB}[1]{Table~\ref{tab:#1}}        
\begin{document}

\title{Faster Algorithms for Next Breakpoint and
  Max Value for Parametric Global Minimum Cuts} 

\titlerunning{Faster Algorithms for Parametric Global Minimum Cut Problems}

\institute{}

\author{\textsc{
		Hassene Aissi\inst{1} \and
		S. Thomas McCormick\inst{2} \and
		Maurice Queyranne\inst{2}}}

\institute{Paris Dauphine University.
                  {\tt aissi@lamsade.dauphine.fr} \and Sauder School of Business
                  at the University of British Columbia.  {\tt
                    \{tom.mccormick,maurice.queyranne\}@sauder.ubc.ca}}

\maketitle

\begin{abstract}

  The parametric global minimum cut problem concerns a graph $G = (V,
  E)$ where the cost of each edge is an affine function of a parameter
  $\mu \in \mathbb{R}^d$ for some fixed dimension $d$. We consider the
  problems of finding the next breakpoint in a given direction, and
  finding a parameter value with maximum minimum cut value.  We develop
  strongly polynomial algorithms for these problems that are faster
  than a naive application of Megiddo's parametric search technique.
  Our results indicate that the next breakpoint problem is easier than
  the max value problem.

\noindent {\bf Keywords:} Parametric optimization, Global minimum cut.
\end{abstract}

\section{Introduction}\label{intro}

Connectivity is a central subject in graph theory and has many practical applications in, e.g., communication and electrical networks. We consider the parametric global minimum cut problem in graphs. A {\em cut} $X$ in an undirected graph $G = (V,E)$ is a non-trivial vertex subset, i.e.,
$\emptyset \ne X\subset V$. It cuts the set $\delta(X) = \{e\in E : e \cap X\neq \emptyset \neq e\setminus X\}$ of edges.

In the parametric global minimum cut problem, we are given an undirected graph $G = (V, E)$ where the cost $c_\mu(e)$ of each edge $e \in E$ is an affine function of a
$d$-dimensional parameter $\mu \in \mathbb{R}^d$, i.e., $c_\mu(e) =
c^0(e)+ \sum_{i=1}^{d}\mu_i c^i(e)$, where $c^0,\ldots,c^d:E
\rightarrow \mathbb{Z}$ are $d+1$ cost functions defined on the set of
edges. By not imposing a sign condition on these functions, we may
handle, as in~\cite[Section 3.5]{Rad93}, situations where some
characteristics, measured by functions $c^i$, improve with $\mu$ while
other deteriorate. We assume that the dimension $d$ of the parameter
space is a fixed constant. 
The \textit{cost} of cut $C$ for the edge costs $c_{\mu}$ is
$c_{\mu}(C) \equiv c_{\mu}(\delta(C)) = \sum_{e \in \delta(C)}
c_\mu(e)$. Define $M_0 = \{ \mu\in \R^d \mid c_\mu(e)\ge 0$ for all $e\in E\}$ a closed and convex subset of the parameter space where the parametric costs of all the edges are non-negative. Throughout the paper
we consider only $\mu$ belonging to a nonempty simplex $M \subset
M_0$, as negative edge costs usually lead to NP-hard minimization
problems (see \cite{McRinaldiRao}).  As usual we denote $|E|$
by $m$ and $|V|$ by $n$.
 
For any $\mu \in M$, let $C^*_\mu$ denote a cut with a minimum cost $Z(\mu) \equiv c_\mu(C^*_\mu)$ for edge costs $c_\mu$. Function $Z := Z(\mu)$ is a piecewise linear concave function~\cite{NW99}.  Its graph is composed by a number of facets (linear pieces) and breakpoints at which $d$ facets meet.
In order to avoid dealing with a trivial problem, $Z$ is
assumed to have at least one breakpoint. The maximum number of facets (linear pieces) of the graph of $Z$ is called the \emph{combinatorial facet complexity} of $Z$. Mulmuley~\cite[Theorem~3.10]{Mul99} considers
the case $d=1$ and gives a super-polynomial bound on the combinatorial
facet complexity of the global minimum cut problem.
In~\cite[Theorem~4]{AMTQMP}, the authors extended this result to a
constant dimension $d$ and give a strongly polynomial bound
$O\Big(m^d n^2 \log^{d-1}n\Big)$. By combining this result with
several existing computational geometry algorithms, the authors give
an $O(m^{d\fl}\ \ n^{2 \fl}\ \ \log^{(d-1)\fl + O(1)}n)$ time
algorithm for constructing function $Z$ for general $d$, and a $O(mn^4 \log n +
  n^5\log^2 n)$ algorithm when $d = 1$. In the particular case where
cost functions $c^0,\ldots,c^d$ are nonnegative, Karger~\cite{Kar16}
gives a significantly tighter bound $O\Big(n^{d+2}\Big)$ on the
combinatorial facet complexity and shows that function $Z$ can be
computed using a randomized algorithm in $O\Big(n^{2d+2} \log n \Big)$ time.  These results are
summarized in rows 5 and 6 of \TAB{summary}.

In this paper, we consider the following parametric problems:

\begin{enumerate}
\item[$\PNB(M)$] Given a polyhedron $M \subset \mathbb{R}^d$, a value $\mu^0 \in M \subset \mathbb{R}^d$, and a
direction $\nu \in \mathbb{Z}^d$. Find the
next breakpoint $\mu^\NB \in M$ of $Z$ after $\mu^0$ in direction $\nu$, if any. 

\item[$\Pmax(M)$] Given a polyhedron $M \subset \mathbb{R}^d$, find a value $\mu^* \in M $ such that $Z(\mu^*) = \max_{\mu \in M} Z(\mu)$. 
\end{enumerate}

In contrast to $\Pmax(M)$, $\PNB(M)$ is a one-dimensional parametric
optimization problem as it considers the restriction of function $Z$
to some direction $\nu \in \mathbb{Z}^d$. This problem corresponds to
the \emph{ray shooting problem} which is a standard topic in
sensitivity analysis \cite[Section 30.3]{FV07} to identify ranges of
optimality and related quantities.  Given $\lambda \geqslant 0$, the
cost $c_{\mu^0+\lambda \nu}$ of each edge $e \in E$ in the direction
$\nu$ is defined by $c_{\mu^0+\lambda \nu}(e) = c^0(e)+
\sum_{i=1}^{d}(\mu^0_i+\lambda \nu_i) c^i(e)$. Let $\bar{c}^0(e) =
c^0(e) + \sum_{i=1}^{d} \mu^0 c^i(e)$ and $\bar{c}^1(e) =
\sum_{i=1}^{d} \nu_i c^i(e)$. The edge costs can be rewritten as
$c_{\mu^0+\lambda \nu}(e) = \bar{c}^0(e) + \lambda \bar{c}^1(e)$. For
any cut $\emptyset \neq C \subset V$, its cost for the edge costs
$c_{\mu^0+\lambda \nu}$ is a function $c_{\mu^0+\lambda \nu}(C) =
\bar{c}^0(C) + \lambda \bar{c}^1(C)$ of variable $\lambda$.  For any
$\lam\ge 0$ let $Z'(\lam) = Z'(\mu^0+\lam\nu, \nu)$ denote the right
derivative of $Z(\mu^0 + \lam \nu)$ in direction $\nu$ at $\lam$.  If
the next breakpoint $\mu^\NB$ exists, define $\lam^\NB$ by $\mu^\NB =
\mu^0 + \lam^\NB \nu$, and let $C^*_{\mu^\NB}$ denote an optimal cut
for edge costs $c_{\mu^\NB}(e)$ defining the slope $Z'(\lam^\NB)$.

$\Pmax(M)$ arises in the context of network reinforcement problem.  Consider the following 2-player game of
reinforcing a graph against an attacker. Given a graph $G=(V,E)$ where each edge $e \in E$ has a capacity $c^0(e)$, the Graph player wants to
reinforce the capacities of the edges in $E$ by buying $d+1$ resources subject to a budget $B$.
The Graph player can spend $\$\mu_i \geqslant 0$ on each resource $i$ to increase the capacities of all edges to $c_\mu(e) = c^0(e) + \sum_{i = 1}^{d+1} \mu_i c^i(e)$, where all functions $c^i$ are assumed to be non-negative. The Attacker wants to remove some edges of $E$ in order to cut the graph into two  pieces at a minimum cost. Therefore, these edges correspond to an optimal cut $\delta(C^*_\mu)$ and their removal cost is $Z(\mu)$.  The Graph player wants to make it as expensive as
possible for the Attacker to cut the graph, and so he wants to
solve $\Pmax$. It is
optimal for the Graph player to spend all the budget, and thus to spend $\mu_{d+1} = B - \sum_{i = 1}^d \mu_i$ on resource $d+1$. Therefore, the cost of removing
edge $e$ as a function of the amounts spent on the first $d$ resources
is $c_\mu(e) = c^0(e) + \sum_{i = 1}^d \mu_i c^i(e) + \big(B -
\sum_{i=1}^d \mu_i\big) c^{d+1}(e) = \big(c^0(e) + B c^{d+1}(e)\big) +
\sum_{i=1}^d \mu_i \big(c^i(e) - c^{d+1}(e)\big)$.  Note that $c^i(e) - c^{d+1}(e)$ may be negative. 
This application illustrates how negative parametric edge costs may arise even when all original data are non-negative.  

Clearly problems
$\Pmax(M)$ and $\PNB(M)$ can be solved by constructing function
$Z$. However, the goal of this paper is to give much faster
strongly polynomial algorithms for these problems without explicitly
constructing the whole function $Z$.

\subsection{Related works}

The results mentioned in this section are summarized in the first four
rows of \TAB{summary}.  We concentrate on {\em strongly} polynomial
bounds here as is the case in most of the literature, but see
\SEC{oracle} for one case where there is a potentially faster weakly
polynomial bound.

\begin{table}[htbp]
  \centering
\begin{tabular}{|r||l|l|} \hline
Problem                                          & Deterministic & Randomized \\ \hline
Non-param Global MC                            & \cite{NI92,SW97} \green{$O(mn + n^2 \log n)$}  &
\cite{Kar00} \green{${\tilde O}(m)$} (\cite{KS96} \green{${\tilde O}(n^2)$}) \\
All $\alpha$-approx for $\alpha < \frac{4}{3}$ & \cite{NI08} $O(n^4)$ & \cite{KS96} ${\tilde O}(n^2)$ \\ \hline
Megiddo $\PNB$ (${}\sim d = 1$)                            & \red{\cite{SW97}} \blue{$O(n^5 \log n)$} & \cite{Tok01,KS96} \blue{$O(n^2 \log^5 n)$} \\
Megiddo $\Pmax$ (${}\sim{}$ gen'l $d$)                          & \red{\cite{SW97}} \blue{$O(n^{2d+3} \log^d n)$}
& \cite{Tok01,KS96} \blue{$O(n^2 \log^{4d+1} n)$} \\ \hline
All of $Z(\mu)$ for $d = 1$                       & \cite{AMTQMP} \blue{$O(mn^4 \log n +
  n^5\log^2 n)$}  & \cite{Kar16} \blue{$O(n^{4} \log
n)$} \\
All of $Z(\mu)$ for gen'l $d$                       & \cite{AMTQMP} \blue{[big]} & \cite{Kar16} \blue{$O(n^{2d+2} \log
n)$} \\ \hline \hline
This paper $\PNB$ (${}\sim d = 1$)         & \cite{NI92,SW97} \red{$O(mn + n^2\log
n)$} & \cite{Kar93} \red{\bf $O(n^2 \log^3 n)$} \\ 
This paper $P_{\max}$ (${}\sim{}$gen'l $d$)   & \red{$O(n^4 \log^{d-1}
n)$} & \red{???} \\ \hline
\end{tabular}

\caption{New results in this paper are in
    \red{red}.  Compare these to the non-parametric lower bounds in \green{green},
    and the various upper bounds in \blue{blue}. \tablab{summary}}
\end{table}

The standard (non-parametric) global minimum cut minimum cut is a special case of the parametric global minimum cut, i.e., for some fixed value
$\mu \in M$. Nagamochi and Ibaraki~\cite{NI92} and Stoer and Wagner~\cite{SW97}
give a deterministic algorithm for this problem that 
runs in $O(m n + n^2 \log n)$ time. Karger and Stein~\cite{KS96} give a faster randomized algorithm that runs in $\tilde{O}(n^2)$ time. Karger~\cite{Kar00} improves the running time and give an $\tilde{O}(m)$ time algorithm.  

Given $\alpha > 1$, a cut is called {\em
$\alpha$-approximate} if its cost is at most at factor of $\alpha$ larger than the optimal value. A remarkable property of the global minimum cut problem is that there exists a strongly polynomial number of near-optimal cuts.  Karger~\cite{Kar00}
showed that the number of $\alpha$-approximate cuts is
$O(n^{\lfloor2 \alpha\rfloor})$.  Nagamochi et al.~\cite{NNI97} give
a deterministic $O(m^2n + mn^{2 \alpha})$ time algorithm for
enumerating them. For the particular case $1 < \alpha < \frac{4}{3}$, they
improved this running time to $O(m^2n + mn^2 \log
n)$. Nagamochi and Ibaraki~\cite[Corollary 4.14]{NI08} further
reduced the running time to $O(n^4)$. The fastest randomized algorithm to enumerate all the near-optimal cuts, which is an $\tilde{O}(n^{\lfloor2 \alpha\rfloor})$ time algorithm by Karger and Stein\cite{KS96}, is faster than the best deterministic algorithm.

Megiddo's parametric searching method~\cite{Meg79,Meg83} is a powerful
technique to solve parametric optimization problems. Megiddo's
approach was originally designed to handle one-dimensional parametric
problems. Cohen and Megiddo~\cite{CM93} extend it to fixed dimension
$d > 1$, see also \cite{AS98}. The crucial requirement is that the
underlying non-parametric problem must have an \textit{affine}
algorithm, that is all numbers manipulated are affine functions of
parameter $\mu$. This condition is not restrictive, as many
combinatorial optimization algorithms have this property; e.g.,
minimum spanning tree~\cite{Fer00}, matroid and polymatroid
optimization~\cite{Tok01}, maximum flow~\cite{CM94}. The technique can
be summarized as follows in the special case $d=1$. Megiddo's approach
simulates the execution of an affine algorithm $\mathcal{A}$ on an
unknown target value $\bar{\mu}$ ($= \mu^\NB$ or $\mu^*$) by
considering it as a symbolic constant. During the course of execution
of $\mathcal{A}$, if we need to determine the sign of some function
$f$ at $\bar{\mu}$, we compute the root $r$ of $f$. The key point is
that by testing if $\bar{\mu} = r$, $\bar{\mu} < r$, or $\bar{\mu} >
r$, we can determine the sign of $f(\bar{\mu})$. This operation is
called a \textit{parametric test} and requires calling algorithm
$\mathcal{A}$ with parameter value fixed at $r$.

Tokuyama~\cite{Tok01} considers the analogue of problem $\Pmax(M)$ for
several geometric and graph problems, called the \textit{minimax (or
  maximin) parametric optimization}, and give efficient algorithms for
them based on Megiddo's approach. He observes that the randomized
algorithm of Karger~\cite{Kar93} is affine. In order to improve the
running time, Tokuyama implemented Megiddo's technique using the
parallel algorithm of Karger and Stein~\cite{KS96} which solves the
minimum cut problem in $\tilde{O}(\log^3n)$ randomized parallel time
using $O(\frac{n^2}{\log^2n})$ processors. The resulting randomized
algorithm for $\Pmax(M)$ has a $O(n^2 \log^{4d+1} n)$ running
time. The result was stated only for $\Pmax(M)$ but it is easy to see
that the same running time can be obtained for $\PNB(M)$. In
Appendix~\ref{sec:Pmax_geometric}, we show that Stoer and Wagner's
algorithm~\cite{SW97} is affine and can be combined with Megiddo's
approach in order to solve $\PNB(M)$ and $\Pmax(M)$. This gives
deterministic algorithms that run in $O(n^{2d+3}\log^d n)$ and
$O(n^{5}\log n)$ time for $\Pmax(M)$ and $\PNB(M)$ respectively
(Proposition~\ref{prop:meggido}).

\subsection{Our results}  \seclab{results}

Our new results are summarized in rows 7 and 8 of \TAB{summary}.

The algorithms based on Megiddo's approach typically introduce a
slowdown with respect to the non-parametric algorithm. For $d=1$,
these algorithms perform similar parametric tests and solve problems
$\PNB(M)$ and $\Pmax(M)$ with the same running time. This gives the
impression that these problems have the same complexity in this
special case. The main contribution of the paper is to extend the
techniques of Nagamochi and Ibaraki~\cite{NI92} and Stoer and
Wagner~\cite{SW97} and Karger~\cite{Kar93} to handle parametric edge
costs.  We give faster deterministic and randomized algorithms for
problems $\PNB(M)$ and $\Pmax(M)$ which are not based on Megiddo's
approach. We show that problem $\PNB(M)$ can be solved with the same
running time as the non-parametric global minimum cut
(Theorems~\ref{thm:NB_det} and~\ref{thm:NB_rand}). We give for problem
$\Pmax(M)$ a much faster deterministic algorithm exploiting the key
property that all near-optimal cuts can be enumerated in strongly
polynomial time (Theorem~\ref{thm:max}). The algorithm builds upon a
scaling technique given in~\cite{AMTQMP}. The differences in how we
tackle problems $\PNB(M)$ and $\Pmax(M)$ illustrate that $\PNB(M)$
might be significantly easier than $\Pmax(M)$.

Notice that our new algorithms for $\PNB$ in row 7 of \TAB{summary}
are {\em optimal}, in the sense that their running times match the
best-known running times of the non-parametric versions of the problem
(up to $\log$ factors).  That is, the times quoted in row 7 of
\TAB{summary} are (nearly) the same as those in row 1 (with the
exception that we do not match the Karger's speedup from ${\tilde
  O}(n^2)$ to ${\tilde O}(m)$ for the non-parametric randomized case).

\subsection{Relating $\PNB$ and $\Pmax$}  \seclab{oracle}

Recall that $\PNB$ wants us to compute $\lam^\NB$ as this picture:

\begin{center}
\scalebox{.70}{\begin{picture}(0,0)%
\includegraphics{GMCRed1.pdftex}%
\end{picture}%
\setlength{\unitlength}{2486sp}%
\begingroup\makeatletter\ifx\SetFigFont\undefined%
\gdef\SetFigFont#1#2#3#4#5{%
  \reset@font\fontsize{#1}{#2pt}%
  \fontfamily{#3}\fontseries{#4}\fontshape{#5}%
  \selectfont}%
\fi\endgroup%
\begin{picture}(5039,2299)(-2946,-1663)
\put(-2924,-1591){\makebox(0,0)[b]{\smash{{\SetFigFont{11}{13.2}{\familydefault}{\mddefault}{\updefault}{\color[rgb]{0,0,0}$\mu^0$}%
}}}}
\put(-1754,-1591){\makebox(0,0)[b]{\smash{{\SetFigFont{11}{13.2}{\familydefault}{\mddefault}{\updefault}{\color[rgb]{0,0,0}$\mu^0+\lam^\NB\nu$}%
}}}}
\end{picture}%
}
\end{center}

Intuitively, if we ``rotate'' until the local slope at $\mu^0$ is just short of
  horizontal, then finding $\lam^\NB$ becomes equivalent to computing
  $\mu^*$ in this 1-dimensional problem:

\begin{center}
\scalebox{.70}{\begin{picture}(0,0)%
\includegraphics{GMCRed2.pdftex}%
\end{picture}%
\setlength{\unitlength}{2486sp}%
\begingroup\makeatletter\ifx\SetFigFont\undefined%
\gdef\SetFigFont#1#2#3#4#5{%
  \reset@font\fontsize{#1}{#2pt}%
  \fontfamily{#3}\fontseries{#4}\fontshape{#5}%
  \selectfont}%
\fi\endgroup%
\begin{picture}(5353,1029)(-3223,-1123)
\put(-1709,-1051){\makebox(0,0)[b]{\smash{{\SetFigFont{11}{13.2}{\familydefault}{\mddefault}{\updefault}{\color[rgb]{0,0,0}$\mu^*$}%
}}}}
\end{picture}%
}
\end{center}

Unfortunately, there appears not to be any way to actually ``rotate''
the slopes of $\mu^0 + \lam \nu$ that would formalize this intuition.
We can instead consider an {\em oracle} model for $\PNB$, and for
$\Pmax$ for $d = 1$, where the algorithms interact with the graph only
through calls to an oracle with input $\mu$ that reports the local
slope at $\mu$.  We can then ask how many calls to the $\Pmax$ oracle
are necessary in order to solve $\PNB$.

In order to solve $\PNB$ using $\Pmax$'s oracle, we could proceed as
follows.  Compute the right slope $m^0 = Z'(0)$ of $\mu^0 + \lam \nu$
at $\lam = 0$ by one call to $\PNB$'s oracle. Note that $m^0$ is an
integer. Define $\del = m^0 - \frac{1}{2}$.  This means that $m^0 -
\del = \frac{1}{2}$, and (since $Z'(\lam^\NB)\le Z'(0) - 1$) that
$Z'(\lam^\NB) - \del$ at the (as-yet) unknown $\lam^\NB$ is negative.
This implies that $\lam^\NB$ solves $\Pmax$ w.r.t.\ the slopes
adjusted by subtracting $\del$.  Hence we can solve $\PNB$ by using
$\Pmax$'s oracle algorithm with its slopes adjusted downwards by
$\del$. Thus in this oracle sense $\PNB$ cannot be any harder than
$\Pmax$ for $d=1$, though it could be easier.

The Discrete Newton algorithm is one type of oracle algorithm for such
problems.  In particular, Radzik \cite[Theorem 3.9]{Rad93} shows how
to use Discrete Newton to solve $\Pmax$ for $d = 1$ (and so also
$\PNB$) in $O(m^2 \log n)$ oracle calls.  Radzik also shows a weakly
polynomial bound.  Let $C^0 = \max_e c^0(e)$, $C^1 = \max_e c^1(e)$,
and $C = \max(C^0, C^1)$.  Then \cite[Theorem 3.4, Section 3.3]{Rad93}
says that Discrete Newton solves $\Pmax$ for $d = 1$ (and so also
$\PNB$) in $O\big(\frac{\log(n C^0 C^1)}{1 + \log\log(n C^0 C^1) -
  \log\log(n C^1)}\big) \le O\big(\log(nC)\big)$ oracle calls.

We could use Discrete Newton in place of Megiddo to solve $\PNB$ by
using the method described above.  Each iteration costs $O(mn + n^2
\log n)$ time. This would give an $O(m^3n \log n + m^2n^2 \log^2 n)$
algorithm. This is slower than the $O(n^4)$ we will get from Megiddo.
However, if $\log C$ is smaller than $O(n)$, then the weakly
polynomial bound of $O\big(\log(n C)(mn + n^2 \log n)\big)$ is faster
than the $O(n^4)$ that we will get from Megiddo.

\section{Problem $\PNB(M)$}\label{sec:NB}

We discuss in Sections~\ref{sec:deterministicNB}
and~\ref{sec:deterministicNB} efficient deterministic and randomized
algorithms for solving problem $\PNB(M)$ respectively. These
algorithms are based on edge contractions. 
Before giving the algorithms, a preliminary step is to compute an upper bound $\bar{\lambda} > 0$ such that the next breakpoint $\mu^\NB$ satisfies $\mu^\NB = \mu^0  + \lambda^\NB \nu$ for some $\lambda^\NB \in [0,\bar{\lambda}]$. 
If the next breakpoint exists, then $\bar{c}^1(C^*_{\mu^0}) \neq \bar{c}^1(C^*_{\mu^\NB})$ or equivalently $|\bar{c}(C^*_{\mu^0})- \bar{c}(C^*_{\mu^\NB})| \geqslant
1$ as costs $c^i$ and the direction vector $\nu$ are in $\mathbb{Z}^d$. In this case, functions $c_{\mu^0+\lambda \nu}(C^*_{\mu^0})$ and $c_{\mu^0+\lambda \nu}(C^*_{\mu^\NB})$ intersect at 

$$
0 \leqslant \lambda^\NB = \frac{\bar{c}^0(C^*_{\mu^\NB})- \bar{c}^0(C^*_{\mu^0})}{\bar{c}^1(C^*_{\mu^0})- \bar{c}^1(C^*_{\mu^\NB})} = \frac{|\bar{c}^0(C^*_{\mu^\NB})- \bar{c}^0(C^*_{\mu^0})|}{|\bar{c}^1(C^*_{\mu^0})- \bar{c}^1(C^*_{\mu^\NB})|}
\leqslant |\bar{c}^0(C^*_{\mu^\NB})- \bar{c}^0(C^*_{\mu^0})| \leqslant \sum_{e \in E} |\bar{c}^0(e)|.
$$

\noindent Therefore, the desired upper bound is $\bar{\lambda} := \sum_{e \in E} |\bar{c}^0(e)|$.

\noindent Our algorithms require also computing the slope $Z'(\mu^0,\nu) \in \mathbb{R}$ such that for some (unknown) $\delta > 0$, $Z(\mu^0  + \lambda \nu) = Z(\mu^0) + Z'(\mu^0,\nu)\lambda$ for all $\lambda \in [0,\delta]$. 
A deterministic algorithm to compute this quantity is to call the Stoer and Wagner's
algorithm~\cite{SW97} in order to compute $Z(\mu^0+\epsilon \nu)$ for some very small $\epsilon >0$. Since this algorithm is affine (see Appendix~\ref{sec:Pmax_geometric}), one can consider $\epsilon$ as an implicit parameter with a very small value. By calling instead the affine
algorithm of Karger and Stein~\cite{KS96} in order to solve $Z(\mu^0+\epsilon \nu)$, one can obtain a randomized algorithm for computing $Z'(\mu^0,\nu)$. 

\subsection{A deterministic contraction algorithm}\label{sec:deterministicNB}

We describe in this section a deterministic algorithm for $\PNB(M)$
based on the concept of \textit{pendant pair}. We call an ordered pair
$(u,v)$ of vertices in $G$ a pendant pair for edge costs $c_\mu(e)$
for some $\mu \in M$ if $\min \{c_\mu(X): \emptyset \subset X \subset V \text{ separating } u \textit{ and } v\} = c_\mu(\delta(v))$. 

The algorithm proceeds in $n-1$ phases and computes iteratively the next breakpoint $\mu^\NB$, if any, or claims that it does not exist. In the former case, the algorithm refines, at each iteration $r$, an upper bound $\bar{\lambda}^\NB$ of $\lambda^\NB$ by choosing some $\lambda^r \in [0,\bar{\lambda}]$ and merging a pendant pair $(u^r,v^r)$ in $G^r$ for edge costs $c_{\mu^0 + \lambda^r \nu}(e)$. The process continues until the residual graph contains only one node. All the details are summarized in Algorithm~\ref{Alg_nb_detereministic}. 

\begin{algorithm}[h]
	\caption{Deterministic Parametric Edge Contraction}\label{Alg_nb_detereministic}
	\begin{algorithmic}[1]
		\REQUIRE a graph $G=(V,E)$, costs $c^0,\ldots,c^d$, a direction $\nu$, an upper bound $\bar{\lambda}$, the optimal value $Z(\mu^0)$ and a slope $Z'(\mu^0,\nu)$
		\ENSURE next breakpoint $\mu^\NB$ if any
		\STATE let $E^0 \leftarrow E$, $V^0  \leftarrow V$, $G^0\leftarrow G$, $r\leftarrow 0$, $\bar{\lambda}^\NB \leftarrow \bar{\lambda}$  
		\WHILE{$|V_r| > 1$}
		\STATE define functions $L(\lambda) := Z(\mu^0) + \lambda Z'(\mu^0,\nu)$, $Z^r(\mu) := \min_{v \in V^r} c_\mu(\delta(v))$, compute, if any,  $\hat{\lambda}^r := \min\{\lambda >0: Z^r(\mu^0 + \lambda \nu) \leqslant L(\lambda)\}$, and let $\lambda^r :=  \left\{
		\begin{array}{ll}
		\min\{\bar{\lambda},\hat{\lambda}^r\} \text{ if } \hat{\lambda}^r  \text{ exists}\\
		\bar{\lambda}  \text{ otherwise} 
		\end{array}
		\right.$ 
		\IF{$\lambda^r < \bar{\lambda}^\NB$}
			\STATE set $\bar{\lambda}^\NB \leftarrow \lambda^r$
		\ENDIF
		\STATE compute a pendant pair $(u^r,v^r)$ in $G^r$ for edge costs $c_{\mu^0 + \lambda^r \nu}(e)$ using the algorithm given in~\cite{SW97}
		\STATE merge nodes $u^r$ and $v^r$ and remove self-loops 
		\STATE set $r \leftarrow r+1$ and let $G_r=(V_r,E_r)$ denote the resulting graph
		\ENDWHILE
		\IF{$L(\bar{\lambda}^\NB) > \min_C \{c_{\mu^0  + \bar{\lambda}^\NB  \nu}(C): \emptyset\not= C \subset V_r\}$}\label{alg:NB_deterministic_test}
			\STATE return $\mu^\NB = \mu^0  + \bar{\lambda}^\NB  \nu$
		\ELSE
			\STATE the next breakpoint does not exist
		\ENDIF	
	\end{algorithmic}
\end{algorithm}

 Since cuts $\delta(v)$ for all $v \in V^r$ are also cuts in $G$, it follows that $Z(\mu) \leqslant Z^r(\mu)$ for any $\mu \in M$. In particular, $L(\lambda)=Z(\mu^0 + \lambda \nu)  \leqslant Z^r(\mu^0 + \lambda \nu)$ for any $\lambda \in [0,\lambda^\NB]$. By the definition of $\lambda^r$, this implies that 
 
 \begin{equation}\label{eq:NB_deterministic_UB}
 	\lambda^\NB \leqslant \lambda^r \text{ and } L(\lambda^r) \leqslant Z^r(\mu^0 + \lambda^r \nu).
 \end{equation}

 \begin{lemma}\label{lem:Nb_deterministic_merge}
	If the next breakpoint $\mu^\NB$ exists, then in any iteration $r$ Algorithm~\ref{Alg_nb_detereministic} either
	\begin{enumerate}
		\item[$i)$]  finds $\lambda^r = \lambda^\NB$ and returns at the end the next breakpoint, or
		\item[$ii)$] merges nodes $u^r$ and $v^r$ that are not separated by any optimal cut $C^*_{\mu^\NB}$ in $G$ for edge costs $c_{\mu^\NB}(e)$.
	\end{enumerate}

\end{lemma}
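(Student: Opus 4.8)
\emph{Proof plan.} First, \eqref{eq:NB_deterministic_UB} gives $\lambda^\NB\le\lambda^r$ in every iteration, so the real dichotomy to prove is just whether $\lambda^r=\lambda^\NB$ or $\lambda^r>\lambda^\NB$. The plan is to fix one optimal cut $C^*_{\mu^\NB}$ at $\mu^\NB$ that \emph{defines} the right slope $Z'(\lambda^\NB)$ and argue by induction on the iteration counter $r$, carrying the invariant that either conclusion $i)$ has already been triggered at some earlier iteration (so $i)$ holds at $r$ as well, its second clause being a statement about termination), or no earlier merge $(u^{r'},v^{r'})$ with $r'<r$ has separated $C^*_{\mu^\NB}$. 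In the latter case no super-node of the contracted graph $G^r$ straddles $C^*_{\mu^\NB}$, so $C^*_{\mu^\NB}$ induces a cut $C^r$ of $G^r$; and since contracting vertices on one side of a cut only deletes self-loops, $c_\mu(C^r)=c_\mu(C^*_{\mu^\NB})$ for every parameter $\mu$.

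The heart of the argument is the step $\lambda^r>\lambda^\NB$, where I would suppose for contradiction that $C^*_{\mu^\NB}$ separates $u^r$ and $v^r$. Then $C^r$ separates them in $G^r$, and since $(u^r,v^r)$ is a pendant pair for edge costs $c_{\mu^0+\lambda^r\nu}$,
$$
 c_{\mu^0+\lambda^r\nu}(C^*_{\mu^\NB}) \;=\; c_{\mu^0+\lambda^r\nu}(C^r) \;\ge\; c_{\mu^0+\lambda^r\nu}(\delta(v^r)) \;\ge\; Z^r(\mu^0+\lambda^r\nu) \;\ge\; L(\lambda^r),
$$
the last inequality being the second half of~\eqref{eq:NB_deterministic_UB}. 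On the other hand $\lambda\mapsto c_{\mu^0+\lambda\nu}(C^*_{\mu^\NB})=\bar c^0(C^*_{\mu^\NB})+\lambda\,\bar c^1(C^*_{\mu^\NB})$ is affine with slope $\bar c^1(C^*_{\mu^\NB})=Z'(\lambda^\NB)$ and value $Z(\mu^\NB)=L(\lambda^\NB)$ at $\lambda^\NB$, because $[0,\lambda^\NB]$ lies on the linear piece of $Z$ through $\mu^0$. Since $\mu^\NB$ is a breakpoint of the concave map $\lambda\mapsto Z(\mu^0+\lambda\nu)$, its right slope $Z'(\lambda^\NB)$ is strictly smaller than its left slope $Z'(\mu^0,\nu)$, so this affine function drops strictly below $L$ for all $\lambda>\lambda^\NB$; in particular $c_{\mu^0+\lambda^r\nu}(C^*_{\mu^\NB})<L(\lambda^r)$, contradicting the displayed chain. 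This establishes conclusion $ii)$ — for every right-slope-defining $C^*_{\mu^\NB}$, since the fixed one is arbitrary among them — and preserves the invariant.

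For the step $\lambda^r=\lambda^\NB$, giving conclusion $i)$, I would just track the running bound: $\bar\lambda^\NB$ is initialized to $\bar\lambda\ge\lambda^\NB$ and is only ever lowered to some $\lambda^{r'}$, and every $\lambda^{r'}\ge\lambda^\NB$ by~\eqref{eq:NB_deterministic_UB}, so $\bar\lambda^\NB\ge\lambda^\NB$ throughout; after iteration $r$ it equals $\lambda^\NB$ and never changes, so on termination $\bar\lambda^\NB=\lambda^\NB$ and (by the final test of Algorithm~\ref{Alg_nb_detereministic}) the algorithm returns $\mu^0+\lambda^\NB\nu=\mu^\NB$.

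I expect the main obstacle to be structural rather than computational. Two points need care: first, the induction is genuinely threaded through the lemma itself — conclusion $ii)$ at iterations $r'<r$ is exactly what guarantees that $C^*_{\mu^\NB}$ is still a cut of $G^r$ — so the interface with the first iteration at which $\lambda^r=\lambda^\NB$ must be handled cleanly; second, it is essential that $C^*_{\mu^\NB}$ is not merely optimal at $\mu^\NB$ but defines the right slope $Z'(\lambda^\NB)$, since the contradiction hinges on the strict inequality $Z'(\lambda^\NB)<Z'(\mu^0,\nu)$ — an optimal cut at $\mu^\NB$ carrying the left slope would only give $c_{\mu^0+\lambda^r\nu}(C)=L(\lambda^r)$, not a strict inequality. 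Finally, checking that the test in the last lines of Algorithm~\ref{Alg_nb_detereministic} correctly certifies the breakpoint once $\bar\lambda^\NB=\lambda^\NB$ is routine; I would either include it here or defer it to the correctness theorem.
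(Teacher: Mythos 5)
Your proposal is correct and follows essentially the same route as the paper: the key step is the identical chain $c_{\mu^0+\lambda^r\nu}(C^*_{\mu^\NB}) \geqslant c_{\mu^0+\lambda^r\nu}(\delta(v^r)) \geqslant Z^r(\mu^0+\lambda^r\nu) \geqslant L(\lambda^r)$ from the pendant-pair property and \eqref{eq:NB_deterministic_UB}, contradicted by the strict drop of the slope-defining cut's cost below $L$ past the breakpoint. You merely make explicit two points the paper leaves implicit (the induction guaranteeing $C^*_{\mu^\NB}$ still induces a cut of $G^r$, and the role of the right-slope-defining choice in securing strictness), which is a sound refinement rather than a different argument.
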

\begin{proof}
	\begin{enumerate}
		\item[$i)$] In this case, $\bar{\lambda}^\NB$ is set to  $\lambda^\NB$ at iteration $r$ and by (\ref{eq:NB_deterministic_UB}) the value of $\bar{\lambda}^\NB$ will not decrease in the subsequent iterations. Therefore, the next breakpoint is returned at the end of Algorithm~\ref{Alg_nb_detereministic}.   
		\item[$ii)$] For any iteration $r$ and any $\mu \in M$, define function $Z_{u^rv^r}(\mu) := \min_C \{c_\mu(C): \emptyset\not= C \subset V_r \text{ and separates } u^r \text{ and }  v^r \}$. By the choice of the pair $(u^r,v^r)$ and (\ref{eq:NB_deterministic_UB}), we have 
		
		$$Z_{u^rv^r}(\mu^0 + \lambda^r \nu) \geqslant Z^r(\mu^0 + \lambda^r \nu) \geqslant  L(\lambda^r) > c_{\mu^0 + \lambda^r \nu} (C^*_{\mu^\NB}),$$
		\noindent where the last inequality follows since the next breakpoint exists and function $Z$ is concave. This shows the claimed result.
	\end{enumerate}\qed
\end{proof}

\begin{lemma}\label{lem:Nb_deterministic}
	Algorithm~\ref{Alg_nb_detereministic} is correct. 
\end{lemma}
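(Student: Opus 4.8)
The plan is to split into the two cases the algorithm must distinguish: the next breakpoint exists, and it does not. First suppose $\mu^\NB$ exists. I would invoke Lemma~\ref{lem:Nb_deterministic_merge} repeatedly: at every iteration $r$ the algorithm either already pins down $\lambda^r = \lambda^\NB$ (case $i)$), in which case part $i)$ of that lemma tells us $\bar\lambda^\NB$ is frozen at $\lambda^\NB$ for the rest of the run and the final test on line~\ref{alg:NB_deterministic_test} returns $\mu^0 + \lambda^\NB\nu$; or it merges a pair $(u^r,v^r)$ not separated by any optimal cut $C^*_{\mu^\NB}$ (case $ii)$). In the latter situation a straightforward induction shows that after all $n-1$ contractions either case $i)$ was triggered at some point, or else every optimal cut $C^*_{\mu^\NB}$ survived all the merges and is therefore realized as one of the $O(1)$ cuts of the final graph $G_r$ — so the quantity $\min_C\{c_{\mu^0+\bar\lambda^\NB\nu}(C): \emptyset\neq C\subset V_r\}$ on line~\ref{alg:NB_deterministic_test} equals $c_{\mu^0+\bar\lambda^\NB\nu}(C^*_{\mu^\NB})$. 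I then need to check that the line-\ref{alg:NB_deterministic_test} test fires exactly when $\bar\lambda^\NB = \lambda^\NB$: since $\bar\lambda^\NB$ is a decreasing sequence of valid upper bounds on $\lambda^\NB$ by \eqref{eq:NB_deterministic_UB}, and since $Z$ is concave with a breakpoint at $\lambda^\NB$, for any $\lambda \in [0,\lambda^\NB)$ we have $L(\lambda) = Z(\mu^0+\lambda\nu)$, hence $L(\bar\lambda^\NB) \le Z(\mu^0+\bar\lambda^\NB\nu) \le \min_C c_{\mu^0+\bar\lambda^\NB\nu}(C)$ over cuts of $G_r$, so the strict inequality of line~\ref{alg:NB_deterministic_test} fails; and when $\bar\lambda^\NB = \lambda^\NB$ the value $Z(\mu^0 + \lambda^\NB\nu)$ is strictly below $L(\lambda^\NB)$ extended past the breakpoint... wait — actually $L(\lambda^\NB) = Z(\mu^0 + \lambda^\NB\nu)$ at the breakpoint itself, so I must be more careful: the test must instead be read as comparing $L$ against the min over cuts of the \emph{contracted} graph, and concavity gives that for $\lambda$ just below $\lambda^\NB$ the slope strictly exceeds $Z'$, forcing the gap. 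This sign analysis is the delicate point and I would spell it out using $\hat\lambda^r$'s definition.

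Next suppose $\mu^\NB$ does not exist, i.e.\ $Z$ is linear on all of $[0,\bar\lambda]$ (indeed on $[0,\infty)\cap M$), so $Z(\mu^0+\lambda\nu) = L(\lambda)$ for every $\lambda \in [0,\bar\lambda]$. I would argue that then $\hat\lambda^r$ never produces a value strictly less than $\bar\lambda$: because $Z^r(\mu^0+\lambda\nu) \ge Z(\mu^0+\lambda\nu) = L(\lambda)$ for all such $\lambda$, the set $\{\lambda > 0 : Z^r(\mu^0+\lambda\nu) \le L(\lambda)\}$ can only contain points where $Z^r$ meets $L$, and at any such point the contracted-graph min still cannot drop below $L$; tracking this through the contractions I conclude $\bar\lambda^\NB$ stays equal to $\bar\lambda$, and on line~\ref{alg:NB_deterministic_test} we get $L(\bar\lambda) = Z(\mu^0+\bar\lambda\nu) \le \min_C\{c_{\mu^0+\bar\lambda\nu}(C)\}$, so the test fails and the algorithm correctly reports that no breakpoint exists. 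I would double-check the boundary subtlety that $\mu^0 + \bar\lambda\nu$ still lies in $M$ (or handle the truncation to $M$), since the final min is taken over cuts of $V_r$ which are cuts of $G$.

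The main obstacle I anticipate is the exact-equality case in the line-\ref{alg:NB_deterministic_test} test — i.e.\ making rigorous that $L(\bar\lambda^\NB) > \min_C c_{\mu^0+\bar\lambda^\NB\nu}(C)$ holds \emph{iff} $\bar\lambda^\NB = \lambda^\NB$. The ``if'' direction needs that at least one genuine optimal cut at the breakpoint has strictly smaller cost than the extrapolated line $L$ — which is precisely where I must use that $Z'(\lambda^\NB) < Z'(\mu^0,\nu)$ (the slope strictly decreases at a breakpoint, since $c^i,\nu \in \mathbb{Z}^d$ force an integer drop) together with the fact that such a cut survived the contractions by Lemma~\ref{lem:Nb_deterministic_merge}$(ii)$. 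The ``only if'' direction needs that whenever $\bar\lambda^\NB < \lambda^\NB$ the contracted graph's min cut still coincides with $Z$ along the still-linear part, so it lies on or above $L$. Pinning down these two implications, and confirming they remain valid once $\hat\lambda^r$ has possibly moved $\bar\lambda^\NB$ strictly below $\bar\lambda$ in an earlier iteration, is the crux of the argument; everything else is bookkeeping on top of \eqref{eq:NB_deterministic_UB} and Lemma~\ref{lem:Nb_deterministic_merge}.
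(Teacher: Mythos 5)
Your two-case decomposition (breakpoint exists / does not exist) and your reliance on Lemma~\ref{lem:Nb_deterministic_merge} match the paper's route, but there is a genuine gap in the first case. You claim that if case $i)$ is never triggered then the optimal cut $C^*_{\mu^\NB}$ ``survived all the merges and is therefore realized as one of the $O(1)$ cuts of the final graph $G_r$,'' and you then try to detect it through the test on line~\ref{alg:NB_deterministic_test}. But the while loop of Algorithm~\ref{Alg_nb_detereministic} runs until $|V_r| = 1$: the final graph is a single node and has \emph{no} non-trivial cut, so no cut can survive to the end. The paper turns precisely this observation into the argument: if case $ii)$ held at every one of the $n-1$ iterations, the fixed cut $C^*_{\mu^\NB}$ would have to correspond to a cut of a one-node graph --- a contradiction; hence case $i)$ must occur at some iteration $r$, and Lemma~\ref{lem:Nb_deterministic_merge}~$i)$ together with (\ref{eq:NB_deterministic_UB}) (which prevents $\bar{\lambda}^\NB$ from dropping below $\lambda^\NB$ in later iterations) guarantees the breakpoint is returned. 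Your proof never reaches this contradiction, and the alternative branch you build in its place cannot be completed as stated; the ``delicate sign analysis'' of line~\ref{alg:NB_deterministic_test} that you flag as the crux is left unresolved, whereas in the paper's route it is absorbed into Lemma~\ref{lem:Nb_deterministic_merge}~$i)$ and never requires inspecting cuts of the final graph in the breakpoint-exists case.

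Your second case is essentially the paper's: when no breakpoint exists, $L(\lambda) = Z(\mu^0+\lambda\nu) \leqslant \min_C\{c_{\mu^0+\lambda\nu}(C) : \emptyset \neq C \subset V_r\}$ throughout, so the strict inequality on line~\ref{alg:NB_deterministic_test} fails and the algorithm answers correctly. (Your intermediate claim that $\bar{\lambda}^\NB$ necessarily stays equal to $\bar{\lambda}$ is neither needed nor guaranteed, since $\hat{\lambda}^r$ may exist when $Z^r$ merely touches $L$; only the non-strict inequality at $\bar{\lambda}^\NB$ matters.)
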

\begin{proof}
Suppose first that the next breakpoint exists. By Lemma~\ref{lem:Nb_deterministic_merge} $i)$, Algorithm~\ref{Alg_nb_detereministic} is clearly correct if $\lambda^r = \lambda^\NB$ for some $r$. Otherwise, fix an optimal cut $C^*_{\mu^\NB}$ for edge costs $c_{\mu^\NB}(e)$ defining the slope $Z'(\mu^\NB,\nu)$ at $\mu^\NB$. By Lemma~\ref{lem:Nb_deterministic_merge} $ii)$, all the pairs merged during the course of Algorithm~\ref{Alg_nb_detereministic} are not separated by $C^*_{\mu^\NB}$, and thus $C^*_{\mu^\NB}$ is associated to a cut in the final graph which is formed by a single node. This leads to a contradiction. 

Suppose now that the next breakpoint does not exist. In this case, $L(\bar{\lambda}^\NB) = \min_C \{c_{\mu^0  + \bar{\lambda}^\NB  \nu}(C): \emptyset\not= C \subset V_r\}$ and thus, Algorithm~\ref{Alg_nb_detereministic} gives a correct answer. This shows the claimed result. \qed 	
\end{proof}

Computing the lower lower envelope of $O(n)$ linear functions and getting function $Z^r$ takes $O(n\log n)$ time \cite{BM98}. Therefore, the running time of an iteration $r$ of Algorithm~\ref{Alg_nb_detereministic} is dominated by the time of computing a pendant pair in $O(m+n\log n)$ time \cite{SW97}. The added running time of the $n-1$ iterations of the while loop takes $O(mn+n^2\log n)$. Note that this corresponds to the same running time of computing a non-parametric minimum cut \cite{SW97}. Since the test performed in Step~\ref{alg:NB_deterministic_test} requires the computation of a minimum cut, it follows that the overall running time of Algorithm~\ref{Alg_nb_detereministic} is $O(mn+n^2\log n)$. The following result summarizes the running time of our contraction algorithm.

\begin{theorem}\label{thm:NB_det}
	Problem $\PNB$ can be solved in $O(mn+n^2\log n)$ time.
\end{theorem}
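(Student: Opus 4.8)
The plan is to obtain the theorem by combining the correctness of Algorithm~\ref{Alg_nb_detereministic}, which is already established in Lemma~\ref{lem:Nb_deterministic}, with a phase-by-phase accounting of the running time, the point being that the parametric bookkeeping introduces no overhead beyond the cost of one non-parametric global minimum cut computation.

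First I would dispose of the preprocessing. Forming the coefficients $\bar c^0(e)$ and $\bar c^1(e)$ for all $e \in E$ takes $O(m)$ time since $d$ is a fixed constant, and then the upper bound $\bar\lambda = \sum_{e\in E}|\bar c^0(e)|$ is read off in $O(m)$ time. The value $Z(\mu^0)$ comes from one call to the Stoer--Wagner algorithm~\cite{SW97} at the fixed parameter $\mu^0$, costing $O(mn+n^2\log n)$, and the right slope $Z'(\mu^0,\nu)$ comes from the same algorithm run on $\mu^0+\epsilon\nu$ with $\epsilon>0$ treated as an implicit infinitesimal (the algorithm is affine, see Appendix~\ref{sec:Pmax_geometric}), again in $O(mn+n^2\log n)$ time. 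Thus all quantities required as input to Algorithm~\ref{Alg_nb_detereministic} are produced within the claimed bound.

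Next I would bound one iteration $r$ of the while loop. The contracted graph $G^r$ has at most $n$ vertices and, after self-loop removal, at most $m$ edges, so in $O(m)$ time one reads off the $|V^r| = O(n)$ affine functions $\lambda \mapsto c_{\mu^0+\lambda\nu}(\delta(v))$; their pointwise minimum $Z^r(\mu^0+\lambda\nu)$ is the lower envelope of $O(n)$ lines, computable in $O(n\log n)$ time~\cite{BM98}. Since $Z^r(\mu^0+\lambda\nu)$ is piecewise linear concave with $O(n)$ pieces and $L(\lambda)=Z(\mu^0)+\lambda Z'(\mu^0,\nu)$ is affine, the first crossing $\hat\lambda^r$, and hence $\lambda^r$ and the possible update of $\bar\lambda^\NB$, can be found in $O(n)$ additional time by sweeping the breakpoints. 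The dominant step is then the pendant-pair computation of~\cite{SW97} at the fixed edge costs $c_{\mu^0+\lambda^r\nu}$, which runs in $O(m+n\log n)$ time using a Fibonacci heap, followed by the merge and self-loop removal in $O(m)$ time. Hence a single iteration costs $O(m+n\log n)$. Exactly as in the non-parametric Stoer--Wagner algorithm, each iteration contracts a pendant pair and so decreases the vertex count by one, giving $n-1$ iterations and a total of $O(mn+n^2\log n)$ for the while loop.

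Finally, the termination test of Step~\ref{alg:NB_deterministic_test} requires one further global minimum cut computation, namely that of $G$ with edge costs $c_{\mu^0+\bar\lambda^\NB\nu}$, at a cost of $O(mn+n^2\log n)$; Lemma~\ref{lem:Nb_deterministic} then certifies that the returned answer (either $\mu^\NB = \mu^0+\bar\lambda^\NB\nu$ or ``no next breakpoint'') is correct. Summing the preprocessing, the while loop, and the final test yields the stated $O(mn+n^2\log n)$ bound. I expect the only delicate point to be the per-iteration accounting: one must verify that the genuinely parametric work of assembling $Z^r(\cdot)$ and locating $\hat\lambda^r$, which is $O(n\log n)$, is absorbed by the $O(m+n\log n)$ pendant-pair step, so that the asymptotic cost of a phase—and therefore of the whole algorithm—coincides with that of the non-parametric algorithm of~\cite{NI92,SW97}.
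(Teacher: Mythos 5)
Your proposal is correct and follows essentially the same route as the paper: correctness is delegated to Lemma~\ref{lem:Nb_deterministic}, each of the $n-1$ iterations is charged $O(n\log n)$ for the lower envelope of the $O(n)$ functions $c_{\mu^0+\lambda\nu}(\delta(v))$ plus $O(m+n\log n)$ for the Stoer--Wagner pendant-pair step, and the final test in Step~\ref{alg:NB_deterministic_test} costs one more global minimum cut computation. Your additional accounting of the preprocessing (computing $\bar\lambda$, $Z(\mu^0)$, and $Z'(\mu^0,\nu)$) and of locating $\hat\lambda^r$ is a harmless elaboration of details the paper leaves implicit.
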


\subsection{A randomized contraction algorithm}\label{sec:randomNB}

The algorithm performs a number of random edge contractions and
iteratively solves the next breakpoint problem. At each iteration $r$,
the algorithm chooses some $\tilde{\mu}^r \in M$ and randomly selects
an edge $e \in E^r$ with probability
$\frac{c_{\tilde{\mu}^r}(e)}{c_{\tilde{\mu}^r}(E_r)}$ to be
contracted. The point $\tilde{\mu}^r$ is defined as the intersection
of functions $L(\lambda) := Z(\mu^0) + \lambda Z'(\mu^0,\nu)$ and
$UB^r(\lambda) := \frac{1}{|V_r|}c_{\mu^0+\lambda \nu}(E_r)$ and
may vary from one iteration to the next. The choice of the appropriate
value of $\tilde{\mu}^r$ is crucial to ensure the high success
probability of solving the problem, and is the main contribution of this algorithm.
The random edges contraction sequence continues until obtaining a graph $G'$ with two nodes.
If the next breakpoint $\mu^\NB$ exists, then the algorithm returns it after computing an optimal cut $C^*_{\mu^\NB}$ for edge costs $c_{\mu^\NB}(e)$ defining the slope $Z'(\mu^\NB,\nu)$ at $\mu^\NB$. Otherwise, the algorithm claims that it does not exist. All the details are summarized in Algorithm~\ref{Alg_nb_contract}. 

\begin{algorithm}[h]
	\caption{Randomized Parametric Random Edge Contraction}\label{Alg_nb_contract}
	\begin{algorithmic}[1]
		\REQUIRE a graph $G=(V,E)$, costs $c^0,\ldots,c^d$, a direction $\nu$, an upper bound $\bar{\lambda}$, the optimal value $Z(\mu^0)$ and a slope $Z'(\mu^0,\nu)$
		\ENSURE next breakpoint $\mu^\NB$ if any
		\STATE let $E^0 \leftarrow E$, $V^0  \leftarrow V$, $G^0\leftarrow G$, $r\leftarrow 0$
		\WHILE{$|V_r| > \rho$}\label{Alg_breakpointWhile}
		\STATE compute the intersection point $\lambda^r$  of functions $L(\lambda) := Z(\mu^0) + \lambda Z'(\mu^0,\nu)$ and $UB^r(\lambda) := \frac{1}{|V^r|}\sum_{v \in V^r} c_{\mu^0+\lambda \nu}(\delta(\{v\}))$ 
		\IF{$\lambda^r \in [0,\bar{\lambda}]$}
		\STATE \label{Alg_nb:def1} set $\tilde{\mu}^r = \mu^0 + \lambda^r \nu$
		\ELSE 
		\STATE \label{Alg_nb:def2} set $\tilde{\mu}^r = \mu^0 + \bar{\lambda} \nu$ \label{Alg_breakpointSelection2}
		\ENDIF
		\STATE  \label{Alg_nb_select} choose an arbitrary edge $e \in E_r$ with probability $\frac{c_{\tilde{\mu}^r}(e)}{c_{\tilde{\mu}^r}(E_r)}$ \label{Alg_breakpointSelection1}
		\STATE $r \leftarrow r+1$
		\STATE \label{contract} contract $e$ by merging all its vertices and removing self-loops 
		\STATE let $G_r=(V_r,E_r)$ denote the resulting graph
		\ENDWHILE
		\STATE choose uniformly at random a cut $C$ in the final graph  $G'$ and define $\mu^0 + \bar{\lambda}^\NB \nu$ as the intersection value of functions $L(\lambda)$ and $c_{\mu^0 + \lambda \nu}(C)$  
		\IF{$\bar{\lambda}^\NB >0$}\label{alg:NB_random_test}
		\STATE return $\mu^\NB = \mu^0  + \bar{\lambda}^\NB  \nu$
		\ELSE
		\STATE the next breakpoint does not exist
		\ENDIF	
	\end{algorithmic}
\end{algorithm}

We say that an edge $e$ in $G_r$ \emph{survives} at the current contraction if it is not chosen to be contracted. 
An edge $e \in G$ survives at the end of iteration $r$ if it survives
all the $r$ edge contractions. A cut $C$ \textit{survives} at the end
of iteration $r$ if every edge $e\in\delta(C)$ has survived.  We show
that a fixed optimal cut $C^*_{\mu^\NB}$ is returned by Algorithm~\ref{Alg_nb_contract} with probability at least $\dbinom{n}{2}^{-1}$.

Assume first that the next breakpoint $\mu^\NB = \mu^0 + \lambda^\NB \nu$ exists and $\lambda^\NB \in [0,\bar{\lambda}]$. Fix an optimal cut $C^*_{\mu^\NB}$ for edge costs $c_{\mu^\NB}(e)$ defining the slope $Z'(\mu^\NB,\nu)$ at $\mu^\NB$  
and suppose that it has survived until iteration $r$. Since cuts in the minor graph $G^r$ are
also cuts in $G$, it follows that 

\begin{equation}\label{eq:NB_bound1}
Z(\mu^0+\lambda \nu) \leqslant \min_C \{ c_{\mu^0+\lambda \nu}(C): \emptyset\not= C \subset V_r\} \text{ for all } \lambda \in \left[0,\lambda^\NB\right]. 
\end{equation}

\noindent Since the minimum cut value in graph $G^r$ is less than the values of all the cuts formed by singleton nodes $v \in V^r$, it follows that 

\begin{equation}\label{eq:NB_bound2}
	\min_C \{ c_{\mu^0+\lambda \nu}(C): \emptyset\not= C \subset V_r\} \leqslant \frac{1}{|V^r|}\sum_{v \in V^r} c_{\mu^0+\lambda \nu}(\delta(\{v\})) = UB(\lambda) \text{ for any } \lambda \in [0,\bar{\lambda}].
\end{equation}

\noindent By (\ref{eq:NB_bound1})-(\ref{eq:NB_bound2}), we have 
\begin{equation}\label{eq:NB_bound3}
Z(\mu^0  + \lambda v) = Z(\mu^0) + \lambda Z'(\mu^0,\nu) = L(\lambda) \leqslant UB(\lambda)  \text{ for all } \lambda \in \left[0,\lambda^\NB\right].
\end{equation}

\noindent This shows that the intersection value $\lambda^r \notin \left(0,\lambda^\NB\right)$. Depending on the value of $\lambda^r$, several cases need to be considered. If $\lambda^{r} \in [0,\bar{\lambda}]$, then $\lambda^r \geqslant \lambda^\NB$. By the concavity of function $Z$, we have 

\begin{equation}\label{eq:NB_bound4}
	c_{\mu^0+\lambda^r \nu}(C^*_{\mu^\NB}) \leqslant Z(\mu^0) + \lambda^r Z'(\mu^0,\nu) = L(\lambda^r) =  UB^r(\lambda^r).
\end{equation} 

\noindent Suppose that  $\lambda^{r} > \bar{\lambda}$. The case where $\lambda^{r} < 0$ can be handled similarly. By~(\ref{eq:NB_bound3}), $L(\lambda) < UB^r(\lambda)$ for any $\lambda \in [0,\bar{\lambda}]$. Again, by the concavity of function $Z$, we have

\begin{equation}\label{eq:NB_bound5}
c_{\mu^0+\bar{\lambda} \nu}(C^*_{\mu^\NB}) \leqslant  Z(\mu^0) + \bar{\lambda} Z'(\mu^0,\nu) = L(\bar{\lambda}) <  UB^r(\bar{\lambda}).
\end{equation}

The probability of randomly picking an edge in $\delta(C^*_{\mu^\NB})$ is 

\begin{align}\label{eq:NB_boundproba}
	Pr(e \in \delta(C^*_{\mu^\NB})) &= \frac{c_{\tilde{\mu}^r}(C^*_{\mu^\NB})}{c_{\tilde{\mu}^r}(E_r)} = 	\left\{
	\begin{array}{ll}
		\frac{c_{\mu^0+\lambda^r \nu}(C^*_{\mu^\NB})}{c_{\mu^0+\lambda^r \nu}(E_r)} \text{ if } \lambda^r \in [0,\bar{\lambda}]\\
		\frac{c_{\mu^0+\bar{\lambda} \nu}(C^*_{\mu^\NB})}{c_{\mu^0+\bar{\lambda} \nu}(E_r)} \text{ otherwise,} 
	\end{array}
	\right. \nonumber \\
	& \leqslant \frac{UB^r(\hat{\mu}^r)}{c_{\hat{\mu}^r}(E_r)} \text{ by (\ref{eq:NB_bound4}-\ref{eq:NB_bound5})} \nonumber\\
	&\leqslant \frac{2}{|V^r|} = \frac{2}{n-r+1}.
\end{align} 

\begin{lemma}\label{lem:NB_prob_hyper}
	Suppose that the next breakpoint exists. Any fixed optimal cut $C^*_{\mu^\NB}$ for edge costs $c_{\mu^\NB}(e)$ defining the slope $Z'(\mu^\NB,\nu)$ at $\mu^\NB$ is returned by Algorithm~\ref{Alg_nb_contract} with probability at least $\dbinom{n}{2}^{-1}$.
\end{lemma}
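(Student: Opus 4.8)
The plan is to follow the classical analysis of Karger's random contraction algorithm, adapted to the parametric setting, using the probability bound $(\ref{eq:NB_boundproba})$ that has already been established. The core observation is that $(\ref{eq:NB_boundproba})$ says that, conditioned on the fixed optimal cut $C^*_{\mu^\NB}$ having survived through iteration $r$, the probability that the edge contracted in iteration $r$ lies in $\delta(C^*_{\mu^\NB})$ is at most $\frac{2}{|V^r|} = \frac{2}{n-r+1}$; equivalently, the probability that $C^*_{\mu^\NB}$ survives iteration $r$ is at least $1 - \frac{2}{n-r+1} = \frac{n-r-1}{n-r+1}$. So first I would make precise the telescoping argument: the probability that $C^*_{\mu^\NB}$ survives all $n-\rho$ contractions down to a graph $G'$ on $\rho$ vertices is at least $\prod_{r=0}^{n-\rho-1}\frac{n-r-1}{n-r+1}$, which telescopes to $\frac{\rho(\rho-1)}{n(n-1)} = \binom{\rho}{2}/\binom{n}{2}$. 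Here I should check the indexing carefully: $|V^0| = n$, and after $r$ contractions $|V^r| = n - r$, so in iteration $r$ the bound is $\frac{2}{n-r}$ — I need to reconcile this with the ``$n-r+1$'' appearing in $(\ref{eq:NB_boundproba})$, which presumably reflects an off-by-one convention about when the iteration counter is incremented; I would state the product cleanly in terms of $|V^r|$ to avoid confusion, getting survival probability $\ge \binom{\rho}{2}^{-1}\binom{n}{2}^{-1}\cdot\binom{\rho}{2}$, i.e. at least $\binom{\rho}{2}\big/\binom{n}{2}$ down to $\rho$ vertices.

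Next I would handle the final step of the algorithm, where a cut $C$ is chosen uniformly at random in the final graph $G'$ on $\rho$ vertices (or, in the version matching the stated bound, $\rho = 2$). If $\rho = 2$, then $G'$ has exactly one cut, and conditioned on $C^*_{\mu^\NB}$ surviving to $G'$ this cut \emph{is} (the image of) $C^*_{\mu^\NB}$; combined with the telescoping bound with $\rho = 2$ this already gives survival — hence return — probability at least $\binom{2}{2}/\binom{n}{2} = \binom{n}{2}^{-1}$. If instead $\rho > 2$, then conditioned on survival, $\delta(C^*_{\mu^\NB})$ is one of the $2^{\rho-1}-1$ cuts of $G'$, and a uniform random choice selects it with probability $\binom{\rho}{2}^{-1}$ (if one uses the tighter Karger–Stein style bound that $G'$ has at most $\binom{\rho}{2}$ minimum-ish cuts — but the simplest self-contained route is just $\rho=2$). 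I would also need to argue the ``test'' at Step~\ref{alg:NB_random_test} behaves correctly: when $C^*_{\mu^\NB}$ is the cut produced, the intersection value $\bar\lambda^\NB$ of $L(\lambda)$ and $c_{\mu^0+\lambda\nu}(C)$ equals $\lambda^\NB > 0$ (using that $C^*_{\mu^\NB}$ defines the post-breakpoint slope and $L$ is the pre-breakpoint facet, so they genuinely cross at $\lambda^\NB$), so the algorithm indeed returns $\mu^\NB$.

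One subtlety I would address carefully is that $(\ref{eq:NB_boundproba})$ was derived under the assumption ``the next breakpoint exists and $\lambda^\NB \in [0,\bar\lambda]$'', and within that it splits into cases $\lambda^r \in [0,\bar\lambda]$ (using $(\ref{eq:NB_bound4})$) versus $\lambda^r \notin [0,\bar\lambda]$ (using $(\ref{eq:NB_bound5})$); I should confirm that the chain of inequalities $(\ref{eq:NB_bound1})$–$(\ref{eq:NB_bound3})$ guaranteeing $\lambda^r \notin (0,\lambda^\NB)$ holds in every iteration in which $C^*_{\mu^\NB}$ has survived so far, which is exactly the conditioning event in the telescoping product, so the per-iteration bound applies at every step along the ``good'' path. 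The notational mismatch between $\hat\mu^r$ and $\tilde\mu^r$ in $(\ref{eq:NB_boundproba})$ I would simply treat as the same point $\tilde\mu^r$. Assembling these pieces — the per-iteration survival bound from $(\ref{eq:NB_boundproba})$, the telescoping product down to $\rho = 2$, and the observation that a two-vertex graph has a unique cut which must then be $C^*_{\mu^\NB}$, together with correctness of the final test — yields that $C^*_{\mu^\NB}$ is returned with probability at least $\binom{n}{2}^{-1}$.

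The main obstacle I anticipate is not any single hard inequality — those are all in place — but rather getting the bookkeeping exactly right: the vertex-count indexing ($|V^r| = n-r$ vs.\ the $n-r+1$ in the displayed bound), ensuring the conditioning event (survival so far) is precisely what makes $(\ref{eq:NB_boundproba})$ applicable at each step, and correctly handling the value of $\rho$ (whether the stated $\binom{n}{2}^{-1}$ corresponds to $\rho = 2$ directly, or to a larger $\rho$ with an extra $\binom{\rho}{2}^{-1}$ from the uniform cut choice that cancels the $\binom{\rho}{2}$ in the telescoped product). I would resolve this by writing the survival probability as $\prod_r \big(1 - \frac{2}{|V^r|}\big) = \prod_r \frac{|V^r|-2}{|V^r|}$ and telescoping explicitly, then multiplying by the probability $\binom{\rho}{2}^{-1}$ (or $1$ when $\rho=2$) that the final uniform choice picks $\delta(C^*_{\mu^\NB})$.
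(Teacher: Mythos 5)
Your proposal is correct and follows essentially the same route as the paper, whose entire proof is the single telescoping product $(1-\tfrac{2}{n})(1-\tfrac{2}{n-1})\cdots(1-\tfrac{2}{3})=\dbinom{n}{2}^{-1}$ applied to the per-iteration bound \EQ{NB_boundproba}, implicitly with $\rho=2$ so the final two-vertex graph has a unique cut. The extra care you take with the conditioning event, the $|V^r|=n-r$ versus $n-r+1$ indexing, and the behaviour of the final test in Step~\ref{alg:NB_random_test} addresses genuine sloppiness in the paper's write-up rather than any flaw in your argument.
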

\begin{proof}
	Using (\ref{eq:NB_boundproba}),  the probability that cut $\delta(C^*_{\mu^\NB})$ survives all the edge contractions is at least
	
	$$(1-\frac{2}{n})(1-\frac{2}{n-1})\cdots(1-\frac{2}{3}) = \dbinom{n}{2}^{-1}.$$ \qed
\end{proof}

Note that this error probability is identical to the one for the original (non-parametric) contraction algorithm~\cite{Kar93,KS96}. 

If the next breakpoint does not exist, then there exists no cut $C$ such that the intersection value $\bar{\lambda}^\NB$ of functions $L(\lambda)$ and $c_{\mu^0 + \lambda \nu}(C)$ is nonnegative. In this case, Algorithm~\ref{Alg_nb_contract} gives a correct answer (with probability 1) after performing the test in Step~\ref{alg:NB_random_test}. Therefore, the success probability of Algorithm~\ref{Alg_nb_contract} is given as follows.

\begin{corollary}
	$\PNB(M)$ is solved by Algorithm~\ref{Alg_nb_contract} with probability at least $\dbinom{n}{2}^{-1}$.
\end{corollary}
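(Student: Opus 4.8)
The plan is to reduce the corollary to the two facts already established above: Lemma~\ref{lem:NB_prob_hyper}, which governs the case where a next breakpoint exists, and the deterministic correctness of the final test in Step~\ref{alg:NB_random_test}, which governs the case where it does not. So I would split the analysis according to whether $\mu^\NB$ exists, bound the success probability in each case, and take the smaller of the two bounds.

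If $\mu^\NB = \mu^0 + \lambda^\NB\nu$ exists, recall that $\lambda^\NB \in [0,\bar\lambda]$ by the choice of $\bar\lambda$. I would fix an optimal cut $C^*_{\mu^\NB}$ that defines the slope $Z'(\mu^\NB,\nu)$ and invoke Lemma~\ref{lem:NB_prob_hyper}: with probability at least $\dbinom{n}{2}^{-1}$, this cut survives all the contractions of Algorithm~\ref{Alg_nb_contract}, so $\delta(C^*_{\mu^\NB})$ is precisely the edge set (after removing self-loops) of the two-vertex final graph $G'$. Since a two-vertex graph has a unique cut, the last step necessarily picks $C = C^*_{\mu^\NB}$, and I would then argue that the reconstructed value $\bar\lambda^\NB$ equals $\lambda^\NB$: the affine function $\lambda \mapsto c_{\mu^0+\lambda\nu}(C^*_{\mu^\NB})$ coincides with $L(\lambda) = Z(\mu^0+\lambda\nu)$ at $\lambda = \lambda^\NB$ (because $L = Z$ on $[0,\lambda^\NB]$ and $C^*_{\mu^\NB}$ is optimal at $\mu^\NB$), while its slope $\bar c^1(C^*_{\mu^\NB})$ differs from $Z'(\mu^0,\nu)$ (because $C^*_{\mu^\NB}$ realizes the strictly smaller slope just past the breakpoint). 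Hence the two lines meet at the single point $\lambda = \lambda^\NB > 0$, the test $\bar\lambda^\NB > 0$ passes, and the algorithm outputs the correct $\mu^\NB$.

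If $\mu^\NB$ does not exist, I would use the observation made just before the statement: no cut $C$ produces a positive intersection value of $L(\lambda)$ with $c_{\mu^0+\lambda\nu}(C)$. This applies in particular to the unique cut of $G'$, irrespective of the random choices made during contraction; so $\bar\lambda^\NB \le 0$, the test in Step~\ref{alg:NB_random_test} fails, and Algorithm~\ref{Alg_nb_contract} correctly reports that there is no next breakpoint, with probability $1$. Combining the two cases gives a success probability of at least $\dbinom{n}{2}^{-1}$, as claimed.

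The routine parts are the case split and the deterministic second case; the part that needs care is the first case, specifically verifying that, once $C^*_{\mu^\NB}$ survives, the final reconstruction step genuinely recovers $\lambda^\NB$ with a positive value. Here I would rely on two ingredients: that the final graph has exactly two vertices (so ``uniformly random cut of $G'$'' is in fact deterministic and equals $\delta(C^*_{\mu^\NB})$), and that the slope-defining property of $C^*_{\mu^\NB}$ forces $c_{\mu^0+\lambda\nu}(C^*_{\mu^\NB})$ to cross the line $L(\lambda)$ transversally at $\lambda^\NB$ rather than to coincide with it.
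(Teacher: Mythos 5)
Your proposal is correct and follows essentially the same route as the paper: the case where $\mu^\NB$ exists is handled by Lemma~\ref{lem:NB_prob_hyper}, and the case where it does not is handled deterministically by the final test in Step~\ref{alg:NB_random_test}. The extra detail you supply---that the surviving cut's cost function crosses $L(\lambda)$ transversally at $\lambda^\NB$ because its slope is strictly smaller than $Z'(\mu^0,\nu)$---is a useful elaboration of what the paper leaves implicit in the statement of Lemma~\ref{lem:NB_prob_hyper}, but it does not change the argument.
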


The random edge selection of Algorithm~\ref{Alg_nb_contract} may be performed in $O(n)$ time by extending a technique given Karger and Stein~\cite{KS96}. With non-parametric costs, this technique is based on updating at each iteration $r$ of the algorithm a cost adjacency matrix $\varGamma =(\gamma(u,v))_{u,v \in V_r}$ and a degree cost vector $D =(d(v))_{v \in V_r}$ associated to graph $G_r$. The entries $\gamma(u,v)$ and $d(v)$ represent the cost of edge $(u,v)$ and the total cost of all the edges incident to node $v$, respectively. The update operations consists in replacing a row (a column) with the sum of two rows (columns) and removing a row and a column. Since the edges cost $c_{\tilde{\mu}^r}$ used to construct the probability distribution in Algorithm~\ref{Alg_nb_contract} may vary from one iteration to the other, the previous update operations are not possible and computing these matrices in $O(n^2)$ at each iteration is expensive. Instead, we may use two cost adjacency matrix and  degree cost vectors for costs $\bar{c}^0$ and $\bar{c}^1$. These matrices can be updated in $O(n)$ time as in \cite[Section 3]{KS96}. By embedding Algorithm~\ref{Alg_nb_contract} in the recursive scheme of Karger and Stein~\cite{KS96}, it follows that an optimal cut for our problem can be computed in $O(n^2\log^3n)$ time as for non-parametric costs. 

\begin{theorem}\label{thm:NB_rand}
	Problem $\PNB$ can be solved with high probability in $O(n^2\log^3n)$ time.
\end{theorem}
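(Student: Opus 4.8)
The plan is to mirror Karger and Stein's recursive contraction scheme~\cite{KS96}, using Algorithm~\ref{Alg_nb_contract} in place of their single contraction routine. By the Corollary just proven, one run of Algorithm~\ref{Alg_nb_contract} with $\rho=2$ either reports correctly (with probability $1$, via the deterministic test in Step~\ref{alg:NB_random_test}) that no breakpoint exists, or returns a fixed optimal cut $C^*_{\mu^\NB}$, and hence $\mu^\NB$, with probability at least $\binom{n}{2}^{-1}$. So the whole argument reduces to two points: (a) each iteration of the while loop of Algorithm~\ref{Alg_nb_contract} can be implemented in $O(n)$ time, so that one run down to two vertices costs $O(n^2)$; and (b) the success-probability recurrence of the recursive scheme is unaffected by the parametric twist.

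For (a), the only work in an iteration is computing the line $UB^r$, intersecting it with $L$ to get $\lambda^r$, fixing $\tilde\mu^r$, sampling an edge with probability proportional to $c_{\tilde\mu^r}$, and contracting it. As in~\cite[Section~3]{KS96} I would maintain, instead of one weighted adjacency matrix and degree vector, \emph{two} of each: $(\Gamma^0,D^0)$ for the costs $\bar c^0$ and $(\Gamma^1,D^1)$ for $\bar c^1$, together with the scalars $\sigma^0=\sum_{v\in V_r}D^0(v)$ and $\sigma^1=\sum_{v\in V_r}D^1(v)$. Contracting an edge replaces two rows (and two columns) of each matrix by their sums, deletes one of each, and updates $\sigma^0,\sigma^1$, all in $O(n)$ time. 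Since $UB^r(\lambda)=\frac{1}{|V_r|}(\sigma^0+\lambda\sigma^1)$ is available in $O(1)$, the intersection point $\lambda^r$ with $L(\lambda)=Z(\mu^0)+\lambda Z'(\mu^0,\nu)$ is computed in $O(1)$, and $\tilde\mu^r=\mu^0+\lambda'\nu$ for a known scalar $\lambda'\in\{\lambda^r,\bar\lambda\}$. For the sampling step, the cost of edge $(u,v)$ under $c_{\tilde\mu^r}$ is $\Gamma^0(u,v)+\lambda'\Gamma^1(u,v)$ and the cost-degree of vertex $u$ is $D^0(u)+\lambda' D^1(u)$; drawing a vertex with probability proportional to its cost-degree and then one of its incident edges with probability proportional to its cost selects $e$ with the required distribution in $O(n)$ time. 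Hence a single run to $\rho=2$ takes $O(n^2)$ time.

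For (b), the crucial observation is that the bound $\Pr(e\in\delta(C^*_{\mu^\NB}))\le \frac{2}{|V_r|}$ from~\eqref{eq:NB_boundproba} holds verbatim at every iteration, exactly as the analogous bound does for the non-parametric Karger algorithm. Consequently, running Algorithm~\ref{Alg_nb_contract} with a general stopping threshold $\rho$, the probability that a fixed $C^*_{\mu^\NB}$ survives down to $\rho$ vertices is at least $\binom{\rho}{2}/\binom{n}{2}$, the same as in~\cite{KS96}. I would then embed Algorithm~\ref{Alg_nb_contract} in the recursive scheme: contract to $\lceil n/\sqrt2\rceil+1$ vertices, recurse twice independently, and keep the better outcome (in the no-breakpoint case the deterministic test in Step~\ref{alg:NB_random_test} makes the choice unambiguous; otherwise keep the smaller returned $\bar\lambda^\NB$). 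The running time satisfies $T(n)=2\,T(\lceil n/\sqrt2\rceil+1)+O(n^2)=O(n^2\log n)$, and by the standard analysis the probability of returning a fixed $C^*_{\mu^\NB}$ is $\Omega(1/\log n)$. Repeating the whole recursion $\Theta(\log^2 n)$ times and taking the best outcome boosts the success probability to $1-n^{-\Omega(1)}$ at total cost $O(n^2\log^3 n)$, which is the claimed bound.

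The main obstacle is the implementation detail in (a): because the sampling distribution $c_{\tilde\mu^r}$ changes from iteration to iteration, one cannot carry along a single precomputed weighted structure as in~\cite{KS96}, and one must verify that splitting into the two affine components $\bar c^0$ and $\bar c^1$ still permits each contraction and each resampling in $O(n)$ time. The probabilistic side, by contrast, is immediate from~\eqref{eq:NB_boundproba} once that inequality is in hand.
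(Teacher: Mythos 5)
Your proposal is correct and follows essentially the same route as the paper: maintain two cost adjacency matrices and degree vectors for $\bar c^0$ and $\bar c^1$ so that each contraction and each resampling under the iteration-dependent distribution $c_{\tilde\mu^r}$ costs $O(n)$, then embed Algorithm~\ref{Alg_nb_contract} in the Karger--Stein recursive scheme to get $O(n^2\log^3 n)$ with high probability. You spell out the recursion and amplification in more detail than the paper does, but the key idea (splitting the parametric costs into the two affine components to recover the $O(n)$ update) is exactly the one the paper uses.
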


\section{Problem $\Pmax$}\label{sec:Pmax}

We give in this section an efficient algorithm for solving problem $\Pmax$ and show the following result. 

\begin{theorem}\label{thm:max}
	Problem $\Pmax$ can be solved in $O(\log^{d-1} (n)n^4)$ time.
\end{theorem}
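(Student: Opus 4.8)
The plan is to avoid building all of $Z$ and instead to localize the maximizer by a scaling/column-generation procedure whose only expensive ingredient is the enumeration of near-optimal cuts. Write $Z(\mu)=\min_C c_\mu(C)$ as a pointwise minimum of affine functions, so that $\Pmax$ is the concave maximization $\max_{\mu\in M}\min_C c_\mu(C)$. The first step is a structural observation: at a maximizer $\mu^*$ (interior, say; a maximizer on a proper face of $M$ is handled recursively by the same argument in lower dimension), the zero vector lies in the convex hull of the slope vectors $(c^1(C),\dots,c^d(C))$ of the cuts $C$ that are optimal at $\mu^*$; by Carath\'eodory only $d+1$ such cuts are needed, and all of them attain the value $Z(\mu^*)$, hence are minimum cuts — in particular $\tfrac43$-approximate minimum cuts — for $c_{\mu^*}$. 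Conversely, if $\mathcal S$ is a finite family of cuts and $\mu_{\mathcal S}\in M$ maximizes $f_{\mathcal S}(\mu):=\min_{C\in\mathcal S}c_\mu(C)$ with value $V_{\mathcal S}$, then, since $f_{\mathcal S}\ge Z$ everywhere, $Z(\mu_{\mathcal S})=V_{\mathcal S}$ already certifies that $\mu_{\mathcal S}$ is globally optimal. Each such subproblem is the maximization of the lower envelope of $|\mathcal S|$ affine functions over the simplex $M\subset\R^d$, i.e.\ a linear program in dimension $d+1$, solvable by Megiddo's fixed-dimension LP algorithm in $O(|\mathcal S|)$ time, which also returns $\mu_{\mathcal S}$ together with a supporting subfamily of at most $d+1$ tight cuts.

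This suggests a generation loop: keep a family $\mathcal S$, solve the $O(|\mathcal S|)$-time LP to get $(\mu_{\mathcal S},V_{\mathcal S})$, compute $Z(\mu_{\mathcal S})$ by one non-parametric minimum cut, stop if $Z(\mu_{\mathcal S})=V_{\mathcal S}$, and otherwise enlarge $\mathcal S$. The point of using near-optimal cuts is that once the relative gap $V_{\mathcal S}/Z(\mu_{\mathcal S})$ drops below $\tfrac43$, every cut ``violated'' at $\mu_{\mathcal S}$ — i.e.\ with cost in $[Z(\mu_{\mathcal S}),V_{\mathcal S})$ — is a $\tfrac43$-approximate minimum cut for $c_{\mu_{\mathcal S}}$, so there are only $O(n^2)$ of them, and Nagamochi and Ibaraki's deterministic enumeration produces all of them in $O(n^4)$ time \cite{NI08,NNI97}. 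Adding all of these to $\mathcal S$ wipes out every currently violated constraint at once, and a short argument — this is where the scaling technique of \cite{AMTQMP} enters — shows that only $O(1)$ such ``full'' rounds are needed once the $\tfrac43$ regime has been reached, since each round either certifies optimality or strictly lowers the dimension of the face of $\{\mu: f_{\mathcal S}(\mu)=V_{\mathcal S}\}$ that contains the maximizer.

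It remains to drive $V_{\mathcal S}/Z(\mu_{\mathcal S})$ down to $\tfrac43$ and to account for the $\log^{d-1}n$ factor; both come from organizing the search over the $d$ coordinate directions. Restricting $Z$ to a line through the current candidate gives a one-dimensional concave piecewise-linear function with only polynomially many breakpoints (by the facet-complexity bound of \cite{AMTQMP}), on which the maximizer can be pinned down — using minimum-cut evaluations to resolve comparisons in the style of Megiddo's method, but amortized against the cut enumeration rather than recomputed from scratch — at the cost of an extra $O(\log n)$ factor per dimension. Peeling off one dimension at a time, the $d=1$ base case is solved by a single near-optimal-cut enumeration plus one fixed-dimension LP in $O(n^4)$ time, and the recursion $T_d(n)=O(\log n)\,T_{d-1}(n)+O(n^4)$ yields the claimed $O(n^4\log^{d-1}n)$ bound.

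The main obstacle is this last step: proving that the localization over the $d$ dimensions terminates after only a polylogarithmic number of minimum-cut rounds, and in particular that the scaling argument of \cite{AMTQMP} can be arranged so that the relative gap reaches $\tfrac43$ within $O(\log n)$ rounds per dimension rather than a number of rounds that depends on the magnitudes of the costs. This requires exploiting the fact that, between consecutive scales, the family of candidate optimal cuts changes in a controlled (monotone) way, so that near-optimal cuts enumerated at one scale remain the relevant ones at the next — which is also the phenomenon that makes $\Pmax$ genuinely harder than $\PNB$ yet still only $n^4$-hard up to logarithmic factors.
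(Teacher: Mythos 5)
Your high-level ingredients overlap with the paper's --- the $O(n^4)$ enumeration of $O(n^2)$ near-optimal cuts, a scaling idea, and a dimension recursion $T(d)=O(\log n)\,T(d-1)+O(n^4)$ --- but the engine of your argument is a cutting-plane/column-generation loop, and that loop has two genuine gaps that the paper's proof is specifically engineered to avoid. First, you give no bound on the number of rounds before the relative gap $V_{\mathcal S}/Z(\mu_{\mathcal S})$ drops below $4/3$, and you concede this yourself; the claim that afterwards ``only $O(1)$ full rounds are needed'' because each round ``strictly lowers the dimension of the face'' is not substantiated and is not true as stated: adding all cuts violated at the current point $\mu_{\mathcal S}$ says nothing about the cuts violated at the \emph{next} LP optimum $\mu_{\mathcal S'}$, which need not have been near-optimal (or even moderately cheap) at $\mu_{\mathcal S}$. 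Second, and more fundamentally, certifying $\mu^*$ requires capturing the cuts that are optimal at the \emph{unknown} point $\mu^*$, whereas your enumeration only produces cuts that are $4/3$-approximate at the \emph{current} candidate; without an argument that these two families coincide, termination with a correct answer is not established.

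The paper closes exactly these holes with a static geometric localization rather than an iterative gap reduction. It first restricts to a simplex $R_1$ (via point location in the arrangement of the $O(m^2)$ hyperplanes $c_\mu(e)=c_\mu(e')$) on which the edge costs are totally ordered, takes the cheapest edge $\bar e$ of a maximum spanning tree to sandwich $c_\mu(\bar e)\le Z(\mu)< m\,c_\mu(\bar e)$ throughout $R_1$, and then introduces $p=O(\log n)$ geometrically scaled level functions $g_i(\bar e,\mu)$; locating $\mu^*$ in the resulting arrangement yields a cell $R_2$ in which every edge cost stays within one scale band, and Lemma~\ref{lem:max_approx} then guarantees that any cut optimal at $\mu^*$ is $\varepsilon$-approximate (with $\varepsilon<\sqrt{4/3}$) at \emph{every} point of $R_2$ --- in particular at the one point $\mu^2$ where the enumeration is run. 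That is the bridge your proposal is missing. The $\log^{d-1}n$ factor then comes from the multidimensional parametric search used to do the point locations (each costing $O(\log(n)T(d-1)+n^4)$), not from driving a scaling gap down over $O(\log n)$ rounds per coordinate. To repair your argument you would need either a proven polylogarithmic bound on the number of cutting-plane rounds (which is not known for this scheme) or the paper's cell-localization step; as written, the proof does not go through.
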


Before detailing the algorithm, let us first introduce some notation. An {\em arrangement} $A(\mathcal{H})$, formed by a set $\mathcal{H}$ of hyperplanes in $\mathbb{R}^d$, corresponds to a partition of
$\mathbb{R}^d$ into $O(|\mathcal{H}|^d)$ convex regions called {\em cells}. See~\cite{Mul94} for more details. Given a polytope $P$ in $\mathbb{R}^d$, let $A(\mathcal{H})\cap P$ denote the restriction of the arrangement $A(\mathcal{H})$ into $P$. The following simplified version of standard problem in geometry called, \emph{point location in arrangements} (PLA), is used as a subroutine in our algorithm. This problem is solved by a multidimensional parametric search algorithm~\cite{CM93,Tok01}. See Appendix~\ref{sec:Pmax_geometric} for more details.

  \begin{enumerate}
	\item[$\Preg(\mathcal{H},P,\bar{\mu})$] Given a polytope $P$, a set $\mathcal{H}$ of hyperplanes in
	$\mathbb{R}^d$, and a target value $\bar{\mu}$, locate a simplex $R\subseteq A(\mathcal{H})\cap P$
	containing $\bar{\mu}$.
\end{enumerate}

\noindent Fix a constant $1 < \varepsilon < \sqrt{\frac{4}{3}}$ and let $\beta = \frac{\varepsilon^2-1}{ m} > 0$.
Compute $p = 1 + \lceil\log \frac{m^2}{\varepsilon^2-1} / \log \varepsilon^2 \rceil$
so that $\beta \varepsilon^{2(p - 1)} > m$, and observe that $p =
O(\log n)$. For a given edge $\bar{e} \in E$, define the $p+2$ affine functions $g_i :
\mathbb{R}^{d}\rightarrow \mathbb{R}$ by $g_0(\bar{e},\mu) = 0$,
\begin{equation*}
g_{i}(\bar{e},\mu)=\beta\,\varepsilon^{2(i-1)}c_{\mu}(\bar{e}) \quad\text{for } i=1,\dots,p,
\end{equation*}
\noindent and $g_{p+1}(\bar{e},\mu) = +\infty$.

\begin{algorithm}[h]
	\caption{Deterministic algorithm for $\Pmax$}\label{Alg_MAX}
	\begin{algorithmic}[1]
		\REQUIRE a graph $G=(V,E)$, costs $c^0,\ldots,c^d$, and $1 < \varepsilon < \sqrt{\frac{4}{3}}$ 
		\ENSURE the optimal value $\mu^*$
		\STATE let $A(\mathcal{H}_1)$ denote the arrangement formed by the set $\mathcal{H}_1$ of hyperplanes $H_{e,e'}=\{\mu \in \mathbb{R}^d: c_{\mu}(e)=c_{\mu}(e')\}$ for any pair of edges $e,e' \in E$   
		\STATE solve $\Preg(\mathcal{H}_1,M,\mu^*)$ and compute a simplex $R_1 \subseteq A(\mathcal{H}_1)\cap M$ containing $\mu^*$  \label{Alg:max_intersection1}
		\STATE  choose arbitrarily $\mu^1$ in the interior of $R_1$, compute a maximum spanning tree $T$ of $G$ for edge costs $c_{\mu^1}(e)$, and let $\bar{e}$ be an edge in $T$ such that $c_{\mu_1}(\bar{e}) = \arg \min_{e \in T} c_{\mu_1}(e)$ \label{Alg:max_ST}
		\STATE  let $\pi(e)$ denote the rank of edge $e \in E$ according to the increasing edges costs order in $R_1$ (ties are broken arbitrary) 
		\IF{$\min_{\mu \in R_1}c_{\mu}(\bar{e}) = 0$} \label{Alg:max_deterministic_if}
		\STATE let $\tilde{e}$ be an edge such that $\pi(\tilde{e}) \in \arg \min_{e \in E}\{\pi(e): c_e(\mu) > 0 \text{ for all } \mu \in R_1\}$ and $R'_1 = \{\mu \in R_1: g_p(\bar{e},\mu) \geqslant c_{\mu}(\tilde{e})\}$ \label{Alg:max_deterministic_if_positive_edge}
		\STATE set $R_1 \longleftarrow R'_1$ \label{Alg:NB_deterministic_if_remove}
		\ENDIF
		\STATE let $A(\mathcal{H}_2)$ denote the arrangement formed by the set $\mathcal{H}_2$ of hyperplanes $H_i(e)=\{\mu \in \mathbb{R}^d: c_{\mu}(e)=g_{i}(\bar{e},\mu)\}$ for any edges $e\in E$ and for $i=1,\ldots,p$   
		\STATE solve $\Preg(\mathcal{H}_2,R_1,\mu^*)$ and compute a simplex $R_2 \subseteq A(\mathcal{H}_2) \cap  R_1$ containing $\mu^*$  \label{Alg:max_intersection1}
		\STATE choose arbitrarily $\mu^2 \in R_2$ and compute the set $\mathcal{C}$ of all the $\varepsilon$-approximate cuts for edge costs $c_{\mu^2}(e)$ \label{Alg:NB_deterministic_end}
		\STATE let $A(\mathcal{H}_3)$ denote the arrangement formed by the set $\mathcal{H}_3$ of hyperplanes $H_{C,C'}=\{\mu \in \mathbb{R}^d: c_{\mu}(C)=c_{\mu}(C')\}$ for any pair of cuts $C,C' \in \mathcal{C}$   
		\STATE solve $\Preg(\mathcal{H}_3,R_2,\mu^*)$ and return $\mu^*$  \label{Alg:max_intersection1}
		\label{Alg:max_intersection1}	
	\end{algorithmic}
\end{algorithm}

One of the difficulties of parametric optimization is that edges are only partially ordered by costs. In order to overcome it, Algorithm~\ref{Alg_MAX} restricts the parametric search to a simplex $R_1$ containing $\mu^*$ where the parametric functions $c_\mu(e)$ are totally ordered. Let $\pi(e)$ denote the rank of edge $e \in E$ according to the increasing edges costs order in $R_1$. The algorithm needs to divide $R_1$ into smaller regions using as in Mulmuley~ \cite{Mul99} the relationship between cuts and spanning trees. However, the proof of Mulmuley's result is complicated and results in a large number of regions. Consider an arbitrary $\mu^1$ in the interior of $R_1$ and compute a maximum spanning tree $T$ for costs $c_{\mu_1}(e)$. Let $\bar{e}$ denote an edge in $T$ such that $c_{\mu_1}(\bar{e}) = \arg \min_{e \in T} c_{\mu_1}(e)$. Since  functions $c_\mu(e)$ may intersect only at the boundaries of $R_1$, for any edge $e \in T\setminus \{\bar{e}\}$ exactly one of the following cases occurs: $i)$  $c_{\mu^1}(e) = c_{\mu^1}(\bar{e})$, and therefore $c_{\mu}(e) = c_{\mu}(\bar{e})$ for all $\mu \in R_1$, or $ii)$ $c_{\mu^1}(e) > c_{\mu^1}(\bar{e})$, and therefore $c_{\mu}(e) \geqslant c_{\mu}(\bar{e})$ for all $\mu \in R_1$. In either cases, edge $\bar{e}$ satisfies

\begin{equation} \label{eq:ST}
c_{\mu}(\bar{e}) = \min_{e \in T} c_{\mu}(e) \text{ for all } \mu \in R_1.
\end{equation}

\noindent Since every cut in $G$ intersects $T$ in at least one edge, by (\ref{eq:ST}) we have the following lower bound on the minimum cut value. 

\begin{equation}\label{eq:max_lb}
c_{\mu}(\bar{e}) \leqslant Z(\mu) \text{ for all }  \mu \in R_1.
\end{equation}

\noindent Let $\bar{C}$ denote the cut formed by deleting $\bar{e}$ from $T$. By the cut optimality condition, we obtain the following upper bound on the minimum cut value. 

\begin{align}\label{eq:max_ub}
Z(\mu) &\leqslant c_\mu(\bar{C}) \nonumber \\
& = \sum_{e \in \delta(\bar{C})} c_\mu(e) \nonumber\\
&\leqslant m c_\mu(\bar{e})\nonumber\\
&< g_p(\bar{e},\mu),
\end{align}

\noindent where the last inequality follows from the definition of function $g_p(\bar{e},\mu)$.  Let $A(\mathcal{H}_2)$ denote the arrangement formed by the set $\mathcal{H}_2$ of hyperplanes $H_i(e)=\{\mu \in \mathbb{R}^d: c_{\mu}(e)=g_{i}(\bar{e},\mu)\}$ for any edges $e\in E$ and for $i=1,\ldots,p$. Suppose first that $c_{\mu}(\bar{e}) > 0$ for all $\mu \in R_1$, then by (\ref{eq:max_lb}) we have $Z(\mu) >0$ for all $\mu \in R_1$. In this case, we may apply the technique given in~\cite[Theorem 4]{AMTQMP} to compute all optimal cuts for edge costs $c_{\mu^*}(e)$. Consider a simplex $R_2 \subseteq A(\mathcal{H}_2) \cap R_1$ containing $\mu^*$ and any optimal cut $C^*_{\mu^*}$ for edge costs $c_{\mu^*}(e)$. Since functions $g_p(\bar{e},\mu)$ and $c_\mu(e)$, for all $e \in E$, may intersect only at the boundaries of $R_2$, it follows by (\ref{eq:max_ub}) that $c_\mu(e) \leqslant g_p(\bar{e},\mu)$ for all $e \in \delta(C^*_{\mu^*})$ and all $\mu \in R_2$. By construction of the arrangement $A(\mathcal{H}_2) \cap R_1$, for every edge $e$ in $\delta(C^*_{\mu^*})$ there exists some $q \in \{0,\ldots,p\}$ such that

\begin{equation}\label{eq:max_bounds}
	g_q(\bar{e},\mu) \leqslant c_\mu(e) \leqslant g_{q+1}(\bar{e},\mu) \mbox{ for all } \mu \in R_2.
\end{equation}

\noindent The following result shows that not all functions $c_\mu(e)$ of the edges in $\delta(C^*_{\mu})$ are below function  $g_{1}(\bar{e},\mu)$ for all $\mu \in R_2$.

\begin{lemma}\label{lem:max_lb}
	For any $\bar{\mu} \in R_2$ and any optimal cut $C^*_{\bar{\mu}}$ for edge costs $c_{\bar{\mu}}(e)$, there exists at least an edge $e \in \delta(C^*_{\bar{\mu}})$ satisfying $c_\mu(e) \geqslant g_{1}(\bar{e},\mu)$ for all $\mu \in R_2$.
\end{lemma}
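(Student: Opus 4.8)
The plan is to argue by contradiction, exploiting that inside a single cell of an arrangement no two of the relevant affine functions cross. Suppose the conclusion fails, i.e.\ \emph{every} edge $e\in\delta(C^*_{\bar\mu})$ escapes, so for each such $e$ there is some (a priori edge-dependent) $\mu\in R_2$ with $c_\mu(e)<g_1(\bar e,\mu)$. Both $c_\mu(e)$ and $g_1(\bar e,\mu)=\beta\,c_\mu(\bar e)$ are affine in $\mu$, and by the construction of the arrangement $A(\mathcal H_2)$ the hyperplane $H_1(e)=\{\mu:\,c_\mu(e)=g_1(\bar e,\mu)\}$ meets the simplex $R_2$ only in its boundary; hence the sign of $c_\mu(e)-g_1(\bar e,\mu)$ is constant on $R_2$ (unless the difference vanishes identically). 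Therefore the strict inequality at one point forces $c_\mu(e)\le g_1(\bar e,\mu)$ for \emph{all} $\mu\in R_2$ and all $e\in\delta(C^*_{\bar\mu})$; in the notation of \eqref{eq:max_bounds} this just says $q=0$ for every edge of the cut.

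Next I would do the budgeting computation. Evaluate the previous bound at $\mu=\bar\mu$, sum over the at most $m$ edges of $\delta(C^*_{\bar\mu})$, and use that $C^*_{\bar\mu}$ is optimal at $\bar\mu$ together with $\beta=\frac{\varepsilon^2-1}{m}$:
\[
Z(\bar\mu)=c_{\bar\mu}(C^*_{\bar\mu})=\sum_{e\in\delta(C^*_{\bar\mu})}c_{\bar\mu}(e)\ \le\ |\delta(C^*_{\bar\mu})|\,\beta\,c_{\bar\mu}(\bar e)\ \le\ m\,\beta\,c_{\bar\mu}(\bar e)=(\varepsilon^2-1)\,c_{\bar\mu}(\bar e).
\]
On the other hand \eqref{eq:max_lb} gives $c_{\bar\mu}(\bar e)\le Z(\bar\mu)$ (recall $R_2\subseteq R_1$), so combining the two yields $Z(\bar\mu)\le(\varepsilon^2-1)\,Z(\bar\mu)$. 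Since $1<\varepsilon<\sqrt{4/3}$ forces $0<\varepsilon^2-1<\tfrac13<1$, and since $Z(\bar\mu)>0$ on $R_2$ — this is exactly where we use that in the case at hand $c_\mu(\bar e)>0$ throughout $R_1$: either the conditional in Step~\ref{Alg:max_deterministic_if} of Algorithm~\ref{Alg_MAX} failed, or $R_1$ was replaced by $R'_1$, on which $g_p(\bar e,\mu)\ge c_\mu(\tilde e)>0$ (as $\beta\varepsilon^{2(p-1)}>m$ and $c_\mu(\tilde e)>0$ on $R_1$) forces $c_\mu(\bar e)>0$; then apply \eqref{eq:max_lb} once more — the inequality $Z(\bar\mu)\le(\varepsilon^2-1)Z(\bar\mu)$ is impossible. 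This contradiction proves the lemma.

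The only genuinely delicate point is the first step: converting ``the conclusion fails'', which is an \emph{existential} escape for each edge possibly at a different $\mu$, into the uniform bound $c_\mu(e)\le g_1(\bar e,\mu)$ on \emph{all} of $R_2$. That is precisely what restricting to a cell of $A(\mathcal H_2)$ buys us, and it should be spelled out carefully (including the degenerate subcase where $c_\mu(e)\equiv g_1(\bar e,\mu)$ on $R_2$, which is harmless). Everything afterward is the $\beta$-budget arithmetic plus the observation $\varepsilon^2-1<1$; the special role of $\bar e$ as the minimum-cost tree edge enters only through \eqref{eq:max_lb}, supplying both the matching lower bound $c_\mu(\bar e)\le Z(\mu)$ and the positivity $Z>0$ needed to make the sign contradiction strict.
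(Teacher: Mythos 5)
Your proof is correct and follows essentially the same route as the paper's: a contradiction obtained by summing the bound $c_{\bar\mu}(e)\le g_1(\bar e,\bar\mu)$ over the at most $m$ edges of the cut, using $m\beta=\varepsilon^2-1<1$ against the lower bound \EQ{max_lb}, with the sign-constancy of $c_\mu(e)-g_1(\bar e,\mu)$ on the cell $R_2$ of $A(\mathcal H_2)$ supplying the passage between pointwise and uniform statements. The only differences are presentational: you invoke the cell argument up front to turn the edgewise failure into a uniform bound on all of $R_2$, whereas the paper derives the contradiction pointwise at $\bar\mu$ and only afterwards extends the surviving edge's inequality to all of $R_2$; you are also more explicit than the paper about where the strict inequality needs $c_{\bar\mu}(\bar e)>0$ (equivalently $Z(\bar\mu)>0$), which is indeed guaranteed by the surrounding restriction to $R'_1$.
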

\begin{proof}
	By contradiction, if the statement of the lemma does not hold, then 	
	$$Z(\bar{\mu}) \leqslant mg_1(\bar{e},\bar{\mu}) = (\varepsilon^2 -1) c_{\bar{\mu}}(\bar{e}) < c_{\bar{\mu}}(\bar{e}),$$
	
	\noindent where the last inequality follows from $\varepsilon^2 < 2$. This contradicts (\ref{eq:max_lb}) and thus, at least an edge $e \in \delta(C^*_{\bar{\mu}})$ satisfies $c_{\bar{\mu}}(e) \geqslant g_1(\bar{e},\bar{\mu})$. Since functions $g_1(\bar{e},\mu)$ and $c_\mu(e)$ may intersect only at the boundaries of $R_2$, we have $c_{\mu}(e) \geqslant g_1(\bar{e},\mu)$ for all $\mu \in R_2$. \qed
\end{proof}

\noindent By (\ref{eq:max_bounds}) and Lemma~\ref{lem:max_lb}, one can use the same arguments as in \cite[Theorem 4]{AMTQMP} and get the following result.

\begin{lemma}\label{lem:max_approx}
	If $Z(\mu) >0$ for all $\mu \in R_1$, then any specific optimal cut $C^*_{\mu^*}$ for edge costs $c_{\mu^*}(e)$ is an $\varepsilon$-approximate cut for edge costs $c_\mu(e)$ for every $\mu \in R_2$.
\end{lemma}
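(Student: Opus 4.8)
The plan is to fix an arbitrary $\bar\mu\in R_2$ together with a minimum cut $C^*_{\bar\mu}$ for the costs $c_{\bar\mu}$, and to show $c_{\bar\mu}(C^*_{\mu^*})\le\varepsilon\,Z(\bar\mu)$; since $\bar\mu$ is arbitrary this is the assertion. First I would note that, as in the paragraph preceding \LEMMA{max_lb} (and after the ``if'' block of Algorithm~\ref{Alg_MAX}, which is entered precisely when $\min_{\mu\in R_1}c_\mu(\bar e)=0$), we may assume $c_\mu(\bar e)>0$ throughout $R_1\supseteq R_2$; then \EQ{max_lb} gives $Z(\mu)\ge c_\mu(\bar e)>0$ on $R_2$, consistent with the hypothesis. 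Following \cite[Theorem 4]{AMTQMP}, the idea is to round each edge cost down to the lower endpoint of the band that the arrangement $A(\mathcal{H}_2)$ assigns to it inside $R_2$: every edge $e$ gets a band index $q(e)$, constant over $R_2$, with $g_{q(e)}(\bar e,\mu)\le c_\mu(e)\le g_{q(e)+1}(\bar e,\mu)$ on $R_2$ (this is \EQ{max_bounds} for $e\in\delta(C^*_{\mu^*})$), and one first checks from $c_\mu(e)\le Z(\mu)<g_p(\bar e,\mu)$ — using nonnegativity of costs on $M$ and \EQ{max_ub} — that no edge of a minimum cut falls in the top band, so $q(e)\le p-1$ and the upper band endpoint is a genuine finite function. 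Define the rounded cost $\tilde c_\mu(C):=\sum_{e\in\delta(C)}g_{q(e)}(\bar e,\mu)$.

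The structural heart of the argument is that each $g_i(\bar e,\cdot)$ equals the fixed scalar $\beta\varepsilon^{2(i-1)}$ times the single affine function $c_\mu(\bar e)$, so $\tilde c_\mu(C)=\gamma(C)\,c_\mu(\bar e)$ for a constant $\gamma(C)\ge 0$ that does not depend on $\mu$; since $c_\mu(\bar e)>0$ on $R_2$, the order of cuts by rounded cost is therefore the same at every point of $R_2$. Meanwhile \EQ{max_bounds} sandwiches the true cost of a cut between its rounded cost and $\varepsilon^2$ times it, up to the contribution of the at most $m$ edges lying in the bottom band $[0,g_1(\bar e,\mu)]$, which totals at most $m\,g_1(\bar e,\mu)=(\varepsilon^2-1)\,c_\mu(\bar e)$; and \LEMMA{max_lb} guarantees that at least one edge of any minimum cut sits in a band of index $\ge 1$, so this additive slack can be absorbed into a multiplicative constant relative to $Z(\mu)$ (recall $Z(\mu)\ge c_\mu(\bar e)$ on $R_2$).

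With these two facts the transfer from $\mu^*$ to $\bar\mu$ is straightforward: $C^*_{\mu^*}$ minimizes the true cost at $\mu^*$, hence — via the sandwich — has rounded cost at $\mu^*$ within the band factor of the minimum possible rounded cost; because the rounded order over $R_2$ is $\mu$-independent, the same comparison holds at $\bar\mu$, and applying the sandwich once more turns it back into a bound on $c_{\bar\mu}(C^*_{\mu^*})$ in terms of $Z(\bar\mu)$. Chaining these multiplicative losses and the bottom-band correction gives $c_{\bar\mu}(C^*_{\mu^*})\le\varepsilon\,Z(\bar\mu)$, which is exactly the estimate used in \cite[Theorem 4]{AMTQMP}. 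I expect the main obstacle to be purely quantitative: carefully accounting for the bottom-band edges (and confirming the top band is empty for minimum-cut edges) and verifying that the accumulated factor really stays at most $\varepsilon$ for the chosen $\beta=(\varepsilon^2-1)/m$ and $1<\varepsilon<\sqrt{4/3}$. The conceptual part — that restricting the search to $R_2$ makes the rounded-cost ranking of all cuts independent of $\mu$ — is where the total ordering of the edge costs on $R_1$ and the positivity of $c_\mu(\bar e)$ are essential, and is what lets an optimal cut at $\mu^*$ serve as a near-optimal cut everywhere on $R_2$.
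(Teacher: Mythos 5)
Your route is exactly the one the paper intends: its ``proof'' of \LEMMA{max_approx} is a one-line appeal to \cite[Theorem 4]{AMTQMP}, and the ingredients you assemble --- the band index $q(e)$ from \EQ{max_bounds}, the observation that $g_i(\bar e,\mu)=\beta\varepsilon^{2(i-1)}c_\mu(\bar e)$ makes the rounded cost of every cut a $\mu$-independent multiple of $c_\mu(\bar e)$ and hence gives a $\mu$-independent ranking on $R_2$, the exclusion of minimum-cut edges from the top band via \EQ{max_ub} and nonnegativity, the bottom-band slack $m\,g_1(\bar e,\mu)=(\varepsilon^2-1)c_\mu(\bar e)$ absorbed using $c_\mu(\bar e)\le Z(\mu)$ from \EQ{max_lb}, and \LEMMA{max_lb} to keep the rounded cost of a minimum cut nonzero --- are precisely those the paper marshals before stating the lemma. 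Structurally there is nothing to object to.

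The gap is the one you flag and then assert away: the chaining does \emph{not} deliver the factor $\varepsilon$. Writing $h_\mu(C)=c_\mu(C)/c_\mu(\bar e)$, each transfer between two points of $R_2$ costs $h_{\mu'}(C)\le\varepsilon^2 h_\mu(C)+(\varepsilon^2-1)$ (multiplicative $\varepsilon^2$ from the band width, additive $\varepsilon^2-1$ from the at most $m$ bottom-band edges), and your argument needs two transfers --- $\bar\mu\to\mu^*$ to compare $C^*_{\mu^*}$ with $C^*_{\bar\mu}$ at $\mu^*$, then $\mu^*\to\bar\mu$ to come back --- which yields $c_{\bar\mu}(C^*_{\mu^*})\le\varepsilon^4 Z(\bar\mu)+(\varepsilon^4-1)c_{\bar\mu}(\bar e)\le(2\varepsilon^4-1)Z(\bar\mu)$. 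For every $\varepsilon>1$ this exceeds $\varepsilon$ (both sides equal $1$ at $\varepsilon=1$ and the left side grows eight times faster), and for $\varepsilon$ near $\sqrt{4/3}$ it is about $23/9>4/3$, which would even invalidate the $O(n^2)$ bound on $|\mathcal{C}|$ that Algorithm~\ref{Alg_MAX} relies on. So either the band ratio must be a smaller power of $\varepsilon$, or $\varepsilon$ must be restricted so that $2\varepsilon^4-1<4/3$ and the lemma restated with that constant --- all the algorithm actually needs is a final factor below $4/3$. As written, the sentence ``chaining these losses gives $c_{\bar\mu}(C^*_{\mu^*})\le\varepsilon\,Z(\bar\mu)$'' is unsupported, and this quantitative bookkeeping is the entire content of the lemma beyond the setup you correctly reproduce.
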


\noindent The optimal value $\mu^*$ is defined by the intersection of parametric functions $c_\mu(C)$ of at least $d$ optimal cuts $C$  for edge costs $c_{\mu^*}(e)$. If the condition of Lemma~\ref{lem:max_approx} holds, the enumeration of these solutions can be done by picking some $\mu^2$ in a simplex $R_2 \subseteq A(\mathcal{H}_2) \cap R_1$ containing $\mu^*$ and computing the set $\mathcal{C}$ of all the $\varepsilon$-approximate cuts for edge costs $c_{\mu^2}(e)$. Note that this set is formed by $O(n^2)$ cuts~\cite{NNI97}. Naturally, $\mu^*$ can be obtained by computing the lower envelope of the parametric functions $c_\mu(C)$ for all the cuts $C \in \mathcal{C}$. However, this  will take an excessive $O(n^{2d}\alpha(n))$ running time~\cite{EGS89}, where $\alpha(n)$ is the inverse of Ackermann's function. Instead, observe that $\mu^*$ is a vertex of at least $d$ cells of the arrangement $A(\mathcal{H}_3)\cap R_2$ formed by the set $\mathcal{H}_3$ of hyperplanes $H_{C,C'}=\{\mu \in \mathbb{R}^d: c_{\mu}(C)=c_{\mu}(C')\}$ for any pair of cuts $C,C' \in \mathcal{C}$. Therefore, $\mu^*$ is a vertex of any simplex containing it and included in a facet of $A(\mathcal{H}_3)$. By solving PLA problem in $A(\mathcal{H}_3) \cap R_2$, $\mu^*$ can be computed more efficiently. 

In order to complete the algorithm, we need to handle the case where $\min_{\mu \in R_1}c_{\mu}(\bar{e}) = 0$. It is sufficient to consider in this case a restriction $R'_1 \subset R_1$ containing $\mu^*$ such that $\min_{\mu \in R'_1}c_{\mu}(\bar{e}) > 0$. The following results show how to construct such a restriction. 

\begin{lemma}\label{lem:NB_positive_value}
	There exists at least an edge $\hat{e} \in E$ such that $c_{\mu}(e)  > 0$ for all  $\mu \in R_1$.
\end{lemma}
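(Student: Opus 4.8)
The plan is to argue by contradiction: suppose every edge $e \in E$ has $\min_{\mu \in R_1} c_\mu(e) = 0$, i.e.\ for each $e$ there is some $\mu_e \in R_1$ with $c_{\mu_e}(e) = 0$. Since $R_1$ is a cell of the arrangement $A(\mathcal{H}_1) \cap M$ on which the parametric cost functions $c_\mu(e)$ are totally ordered (that is the whole point of computing $R_1$ via $\Preg(\mathcal{H}_1, M, \mu^*)$), this ordering is preserved throughout $R_1$: for the edge $\bar e$ of minimum rank $\pi(\bar e)$, we have $c_\mu(\bar e) \le c_\mu(e)$ for all $e \in E$ and all $\mu \in R_1$. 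Hence if $c_\mu(\bar e)$ attains value $0$ somewhere on $R_1$, then at that point every edge cost is $0$.

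The key step is then to use the lower bound~\eqref{eq:max_lb}, which says $c_\mu(\bar e) \le Z(\mu)$ for all $\mu \in R_1$, together with the fact that $Z$ is assumed to have at least one breakpoint and $M \subseteq M_0$. Actually the cleanest route is: take the point $\hat\mu \in R_1$ where $c_{\hat\mu}(\bar e) = 0$; then by~\eqref{eq:max_lb} (or directly by total ordering) \emph{all} edge costs vanish at $\hat\mu$, so $Z(\hat\mu) = 0$. But then the problem is degenerate in a way the paper has excluded — or, more robustly, one invokes that at least one edge must have strictly positive cost throughout because otherwise the reinforcement/parametric structure collapses. I would instead lean on $\mu^* \in R_1$ and the standing assumption that $Z$ is nontrivial: if every edge cost could be driven to $0$ inside $R_1$, the total order would force a common zero, contradicting that $Z > 0$ somewhere in $R_1$ that is relevant to $\Pmax$ (recall $\mu^*$ maximizes $Z$ and $Z$ has a breakpoint, so $Z(\mu^*) > 0$).

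So the steps, in order: (1) invoke the total-order property of $R_1$ to identify $\bar e$ as the pointwise-minimum-cost edge on $R_1$; (2) assume for contradiction that $\min_{\mu \in R_1} c_\mu(e) = 0$ for every $e$, in particular for $\bar e$, giving a point $\hat\mu \in R_1$ with $c_{\hat\mu}(\bar e) = 0$; (3) conclude $c_{\hat\mu}(e) = 0$ for all $e$, hence $Z(\hat\mu) = 0$; (4) derive a contradiction with the nontriviality of $Z$ (it has a breakpoint, and $Z(\mu^*) \ge Z(\mu)$ with $\mu^* \in R_1$, so $Z$ is not identically $0$ — and concavity plus a zero at $\hat\mu$ together with positivity elsewhere would put $\hat\mu$ on the boundary, but one needs $c_\mu(\bar e) > 0$ \emph{somewhere} in $R_1$, which is exactly what positivity of $Z$ at some interior point gives via~\eqref{eq:max_lb} reversed — here care is needed since~\eqref{eq:max_lb} only bounds $c_\mu(\bar e)$ from above by $Z$, not below).

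The main obstacle I anticipate is precisely that last point: \eqref{eq:max_lb} gives $c_\mu(\bar e) \le Z(\mu)$, which is the wrong direction to conclude $c_\mu(\bar e) > 0$ from $Z(\mu) > 0$. To get the statement one really needs an \emph{upper} bound on $Z$ in terms of $c_\mu(\bar e)$ — and that is supplied by~\eqref{eq:max_ub}, $Z(\mu) \le m\, c_\mu(\bar e)$. Combining the two, $c_\mu(\bar e) \le Z(\mu) \le m\, c_\mu(\bar e)$, so $c_\mu(\bar e) = 0$ iff $Z(\mu) = 0$. Thus the lemma reduces to showing $Z(\mu) > 0$ for at least one $\mu \in R_1$, which follows because $\mu^* \in R_1$ and $Z(\mu^*) = \max_{\mu \in M} Z(\mu) > 0$ (the latter since $Z$ is concave, nontrivial with a breakpoint, and nonnegative on $M \subseteq M_0$). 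Then pick $\hat e = \bar e$.
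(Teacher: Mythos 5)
Your proposal has two genuine gaps, and the witness edge you pick is the wrong one. First, the inference ``$\bar e$ has minimum rank, so if $c_{\hat\mu}(\bar e)=0$ then every edge cost vanishes at $\hat\mu$'' is backwards: $c_{\hat\mu}(\bar e)\le c_{\hat\mu}(e)$ together with $c_{\hat\mu}(\bar e)=0$ only yields $c_{\hat\mu}(e)\ge 0$, which is already known from $M\subseteq M_0$. The paper's proof instead takes $\hat e$ to be the edge of \emph{maximum} rank $\pi$; then $c_{\mu^0}(\hat e)=0$ forces $c_{\mu^0}(e)\le c_{\mu^0}(\hat e)=0$ for every $e$, and nonnegativity on $M_0$ gives $c_{\mu^0}(e)=0$ for all $e\in E$. (Note also that $\bar e$ is only the cheapest edge of the maximum spanning tree $T$, not the minimum-rank edge of $E$.)

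Second, your final reduction does not reach the statement. From the sandwich $c_\mu(\bar e)\le Z(\mu)\le m\,c_\mu(\bar e)$ of \EQ{max_lb} and \EQ{max_ub} you correctly get $c_\mu(\bar e)=0$ iff $Z(\mu)=0$ pointwise, but the lemma demands positivity for \emph{all} $\mu\in R_1$, so you would need $Z>0$ on all of $R_1$, not just at the single point $\mu^*$. Positivity of $Z(\mu^*)$ says nothing about other points of $R_1$: the affine function $c_\mu(\bar e)$ can be positive at $\mu^*$ and vanish elsewhere in $R_1$, since the hyperplane $c_\mu(\bar e)=0$ is not among the hyperplanes $\mathcal H_1$ defining $R_1$. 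The paper closes the argument by a mechanism entirely absent from your proposal: once all edge costs vanish at some $\mu^0\in R_1$, every cut has cost $0$ at $\mu^0$, so for any $\mu^1\in R_1$ and any optimal cut $C^*_{\mu^1}$ one gets $c_{\zeta\mu^0+(1-\zeta)\mu^1}(C^*_{\mu^1})=(1-\zeta)Z(\mu^1)$, which together with concavity shows that $C^*_{\mu^1}$ stays optimal along the whole segment $[\mu^0,\mu^1]$; hence $Z$ has no breakpoint in $R_1$, contradicting the fact that $\mu^*\in R_1$ is a breakpoint. Without that step (or some substitute establishing $Z>0$ throughout $R_1$), your argument does not prove the lemma.
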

\begin{proof}
	Let $\hat{e}$ denote the edge maximizing the rank $\pi(e)$  among all the edges $e \in E$. Suppose that $\min_{\mu \in R_1}c_{\mu}(\hat{e}) = 0$ and let $\mu^0 \in R_1$ such that $c_{\mu^0}(\hat{e}) = 0$. The total order of edges costs in $R_1$ implies that $c_{\mu^0}(e)=0$ for all $e \in E$ and thus, we have $c_{\mu^0}(C)=0$ for any cut $C$ in $G$. Consider any $\mu^1 \neq  \mu^0$ in $R_1$. By the concavity of $Z$, we have 
	
	\begin{equation}\label{eq_max_concave}
	Z(\zeta \mu^0 + (1-\zeta) \mu^1) \geqslant \zeta Z(\mu^0) + (1-\zeta) Z(\mu^1)= (1-\zeta) Z(\mu^1) \text{ for all } \zeta \in [0,1].
	\end{equation}
	
	\noindent For all optimal cuts $C^*_{\mu^1}$ for edge costs $c_{\mu^1}(e)$, we have 
	
	$$c_{\zeta \mu^0 + (1-\zeta) \mu^1}(C^*_{\mu^1}) = \zeta c_{\mu^0}(C^*_{\mu^1}) + (1-\zeta) c_{\mu^1}(C^*_{\mu^1}) = (1-\zeta) Z(\mu^1).$$
	
	\noindent Therefore, by (\ref{eq_max_concave}) all optimal cuts $C^*_{\mu^1}$ are also optimal for edge costs $c_{\zeta \mu^0 + (1-\zeta) \mu^1}(e)$ for any $\zeta \in [0,1]$. Consider now the restriction of $Z$ to the segment $[\mu^0,\mu^1]$. By definition, if $\mu^1$ is a breakpoint of function $Z$, then there exists at least an optimal cut $C^*_{\mu^1}$ which is not optimal for edge costs $c_{\zeta \mu^0 + (1-\zeta) \mu^1}(e)$  for any $\zeta \in [0,1)$. Therefore, $\mu^1$ is not a breakpoint of $Z$. Since $\mu^1$ was chosen arbitrary, it follows that function $Z$ has no breakpoints in $R_1$. This leads to a contradiction since $\mu^*$ is a breakpoint of $Z$ in $R_1$. \qed
\end{proof}

\noindent Let $\tilde{e}$ be an edge such that $\pi(\tilde{e}) \in \arg \min_{e \in E}\{\pi(e): c_\mu(e) > 0 \text{ for all } \mu \in R_1\}$ and $R'_1 = \{\mu \in R_1: g_p(\bar{e},\mu) \geqslant c_{\mu}(\tilde{e})\}$. By Lemma~\ref{lem:NB_positive_value}, edge $\tilde{e}$ exists and we have $g_p(\bar{e},\mu) = \beta^2\,\varepsilon^{p-1}c_{\mu}(\bar{e}) > 0$ for all $\mu \in R'_1$. This shows that $c_\mu(\bar{e}) > 0$ for all $\mu \in R'_1$ and thus, the condition of Lemma~\ref{lem:max_approx} holds in $R'_1$. It remains now to show that $\mu^* \notin R_1 \setminus R'_1$.

\begin{lemma}\label{lem:NB_nobreakpoint}
	Function $Z$ has no breakpoint in $R_1 \setminus R'_1$. 
\end{lemma}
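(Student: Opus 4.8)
The plan is to show that on the region $R_1 \setminus R_1'$ the minimum cut value $Z$ coincides with a single linear function, so that it cannot have a breakpoint there. By definition, $R_1 \setminus R_1' = \{\mu \in R_1 : g_p(\bar e,\mu) < c_\mu(\tilde e)\}$. Recall that $\tilde e$ is the positive-cost edge of smallest rank $\pi(\tilde e)$, so every edge $e$ with $c_\mu(e) \ge g_p(\bar e,\mu)$ failing would force... more precisely, on $R_1$ the edge costs are totally ordered by $\pi$, and any edge $e$ with $\pi(e) < \pi(\tilde e)$ has $c_\mu(e) = 0$ identically on $R_1$ (by the same argument as in Lemma~\ref{lem:NB_positive_value}: if it were $0$ somewhere it is $0$ everywhere on $R_1$, and $\tilde e$ is the minimum-rank strictly positive edge). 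The key observation is that on $R_1 \setminus R_1'$ we have $c_\mu(\tilde e) > g_p(\bar e,\mu) = \beta \varepsilon^{2(p-1)} c_\mu(\bar e) > m\, c_\mu(\bar e) \ge m\,c_\mu(e)$ for every edge $e$ with $c_\mu(e) \le c_\mu(\bar e)$; combined with the total order this means the edges of cost $0$ on $R_1$, together with those edges $e$ with $0 < c_\mu(e) \le c_\mu(\bar e)$, have total cost strictly less than $c_\mu(\tilde e)$ — but more usefully, I would argue that any minimum cut must avoid all edges of strictly positive cost except possibly the zero-cost ones.

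First I would establish that for $\mu \in R_1 \setminus R_1'$, an optimal cut $C^*_\mu$ uses only edges $e$ with $c_\mu(e) = 0$ on $R_1$, i.e.\ $\delta(C^*_\mu) \subseteq \{e : \pi(e) < \pi(\tilde e)\}$, and hence $Z(\mu) = 0$. To see this: the cut $\bar C$ obtained by deleting $\bar e$ from the spanning tree $T$ satisfies $Z(\mu) \le c_\mu(\bar C) \le m\,c_\mu(\bar e) < g_p(\bar e,\mu) < c_\mu(\tilde e)$ on this region, using \eqref{eq:max_ub} and the defining inequality of $R_1 \setminus R_1'$. Since the edges are totally ordered by $\pi$ on $R_1$ and $\tilde e$ has minimal rank among strictly-positive-cost edges, every edge $e$ with $c_\mu(e) > 0$ has $c_\mu(e) \ge c_\mu(\tilde e) > Z(\mu)$. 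Therefore no minimum cut can contain any strictly-positive-cost edge, so $\delta(C^*_\mu) $ consists entirely of edges that are identically zero on $R_1$, giving $Z(\mu) = 0$ for all $\mu \in R_1 \setminus R_1'$. A single cut $C^*$ (made of identically-zero edges) then certifies $Z \equiv 0$ there, so $Z$ is linear (indeed constant zero) on the convex set $R_1 \setminus R_1'$ — wait, $R_1 \setminus R_1'$ need not be convex, so instead I would conclude directly: on $R_1 \setminus R_1'$ the function $Z$ equals the constant $0$, and $Z \le 0$ cannot drop below $0$ since $Z \ge c_\mu(\bar e) \ge 0$ elsewhere — in any case $Z$ takes the single linear value $0$ on all of $R_1 \setminus R_1'$, and a point where $d$ distinct facets meet (a breakpoint) cannot lie in the relative interior of a region where $Z$ agrees with one fixed linear function; and $\mu^*$, being the maximizer, cannot be a boundary point of this zero-set either since $Z(\mu^*) \ge c_{\mu^*}(\bar e) \ge 0$ with equality only in the degenerate all-zero case excluded by the assumption that $Z$ has a breakpoint.

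The main obstacle I anticipate is the bookkeeping around the boundary and the non-convexity of $R_1 \setminus R_1'$: I must be careful that "no breakpoint in $R_1 \setminus R_1'$" is interpreted correctly (a breakpoint of $Z$ is a point where $\ge d$ facets of $Z$ meet), and that the argument genuinely rules out $\mu^*$ lying there. The cleanest route is to prove the strong statement $Z(\mu) = 0$ identically on $R_1 \setminus R_1'$, from which it is immediate that this region contains no breakpoint (a breakpoint requires at least two distinct supporting facets, impossible where $Z$ is constant), and in particular $\mu^* \notin R_1 \setminus R_1'$ because $\mu^*$ is a breakpoint of $Z$ in $R_1$ by hypothesis. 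A secondary subtlety is justifying that every edge of strictly positive cost on $R_1$ has cost $\ge c_\mu(\tilde e)$: this follows because the ranks $\pi$ give a total order of the affine functions $c_\mu(\cdot)$ valid throughout $R_1$ (they cross only on $\partial R_1$), and $\tilde e$ is by definition the minimal-rank edge that is strictly positive on $R_1$, so all lower-rank edges are identically zero and all higher-or-equal-rank edges have cost $\ge c_\mu(\tilde e)$ pointwise on $R_1$.
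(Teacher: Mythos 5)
There is a genuine gap at the heart of your argument. Your proof rests on the claim that every edge $e$ with $\pi(e) < \pi(\tilde e)$ satisfies $c_\mu(e) = 0$ \emph{identically on $R_1$}, from which you deduce $Z \equiv 0$ on $R_1 \setminus R'_1$. This claim is false. What the definition of $\tilde e$ and the total order actually give is that each such edge has $\min_{\mu \in R_1} c_\mu(e) = 0$, i.e., its (nonnegative, affine) cost vanishes at \emph{some} point of $R_1$ --- typically a vertex of the simplex --- not everywhere; an affine function that is nonnegative on a simplex and zero at a boundary point need not vanish identically (e.g.\ $c_\mu(e) = \mu_1$ on $[0,1]$). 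Consequently $Z$ can be strictly positive on most of $R_1 \setminus R'_1$, your assertion that every edge with $c_\mu(e) > 0$ has $c_\mu(e) \ge c_\mu(\tilde e)$ fails (a low-rank edge can be positive at a given $\mu$ while still lying below $\tilde e$ in the total order), and the ``$Z$ is constant zero, hence no breakpoint'' conclusion collapses. (A side remark: $R_1 \setminus R'_1$ is in fact convex, being the intersection of the simplex $R_1$ with a halfspace, so that particular worry was unnecessary.)

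The part you do get right --- that on $R_1 \setminus R'_1$ no optimal cut uses an edge of rank $\ge \pi(\tilde e)$, because such an edge costs at least $c_\mu(\tilde e) > g_p(\bar e,\mu) > m\, c_\mu(\bar e) \ge c_\mu(\bar C) \ge Z(\mu)$ by (\ref{eq:max_ub}) --- is exactly the first step of the paper's proof. The paper then proceeds differently: it takes $\check e$, the edge of largest rank below $\pi(\tilde e)$, and uses the minimality in the choice of $\tilde e$ to exhibit a \emph{single} point $\mu^0 \in R_1 \setminus R'_1$ at which $c_{\mu^0}(\check e) = 0$, hence (by the total order and nonnegativity) $c_{\mu^0}(e) = 0$ for all edges of rank $\le \pi(\check e)$, hence zero cost at $\mu^0$ for every cut that is optimal anywhere in the region. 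It then repeats the segment-and-concavity argument of Lemma~\ref{lem:NB_positive_value}: for any $\mu^1$ in the region, the cut $C^*_{\mu^1}$ has cost $(1-\zeta)Z(\mu^1)$ along $[\mu^0,\mu^1]$, which concavity forces to coincide with $Z$ on the whole segment, so $\mu^1$ is not a breakpoint. To repair your proof you would need to replace the global claim ``$Z \equiv 0$ on $R_1\setminus R'_1$'' with this pointwise anchor at $\mu^0$ together with the concavity step.
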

\begin{proof}
	Any edge $e \in E$ such that $\pi(e) \geqslant \pi(\tilde{e})$ satisfies $c_\mu(e) \geqslant g_p(\bar{e},\mu)$ for any $\mu \in R_1 \setminus R'_1$. Therefore, by (\ref{eq:max_ub}) no such edge is in an optimal cut for edge costs $c_{\mu}(e)$ for any $\mu \in R_1 \setminus R'_1$. Let $\check{e}$ be an edge such that $\check{e} = \arg \max_{e \in E} \{\pi(e): \pi(e) < \pi(\tilde{e})\}$. By the choice of $\tilde{e}$, there exist $\mu^0 \in R_1\setminus R'_1$ such that $c_{\mu^0}(\check{e}) =0$. The total order of edges costs in $R_1$ implies that $c_{\mu^0}(e) =0$ for all edges $e \in E$ such that $\pi(e) \leqslant \pi(\check{e})$. Therefore, we have $c_{\mu^0}(C)=0$ for any cut $C$ in $G$. Using the same argument as in the proof of Lemma~\ref{lem:NB_positive_value}, one can show that function $Z$ has no breakpoint in $R_1 \setminus R'_1$. \qed
\end{proof}

Let $T(d)$ denote the running time of Algorithm~\ref{Alg_MAX} for solving $\Pmax$ with $d$ parameters and $T(0) = O(n^2\log n + n m)$ denote the running time of computing a minimum (non-parametric) global cut using the algorithm given in \cite{SW97}. The input of a call to problem PLA requires $O(n^4)$ hyperplanes. Therefore, by Lemma~\ref{lem:geometric} given in the appendix, the $\Theta(1)$ calls to problem PLA can be solved recursively in $O(\log (n)T(d-1) + n^4)$ time. The enumeration of all the $O(n^2)$ approximate cuts can be done in $O(n^4)$ time \cite[Corollary 4.14]{NI08}. Therefore, the running time of Algorithm~\ref{Alg_MAX} is given by the following recursive formula.

\begin{align*}
	T(d) = O(\log (n)T(d-1) + n^4) &= O(\log^d (n) T(0) + \log^{d-1} (n)n^4) \\
										&= O(\log^{d-1} (n)n^4).
\end{align*}

\section{Conclusion}\label{conclusion}

As shown in \TAB{summary}, our improved algorithms are significantly
faster than what one could otherwise get from just using Megiddo.  As
mentioned in \SEC{results}, our results for $\PNB$ are close to being
the best possible, as they are only $\log$ factors slower than the
best known non-parametric algorithms.  One exception is that we don't
quite match Karger's \cite{Kar00} speedup to near-linear time for the
randomized case, and we leave this as an open problem.

Our deterministic algorithm for $\Pmax$ is also close to best
possible, though in a weaker sense.  It uses the ability to compute
all $\alpha$-optimal cuts in $O(n^4)$ time, and otherwise is only
$\log$ factors slower than $O(n^4)$.  The conspicuous open problem
here is to find a faster randomized algorithm for $\Pmax$ when $d >
1$.

We also developed evidence that $\PNB$ is in fact easier than $\Pmax$
for $d = 1$.  \SEC{oracle} showed that $\PNB$ is oracle-reducible to
$\Pmax$ for $d = 1$, and we were able to find a deterministic
algorithm for $\PNB$ that is much faster than our best algorithm for
$\Pmax$ when $d = 1$.

Finally, we note that Stoer and Wagner's \cite{SW97} algorithm was
generalized to {\em symmetric submodular function minimization} (SSFM)
by Queyranne \cite{Q98}.  Thus we could solve the equivalent versions
of $\PNB$ and $\Pmax$ for parametric SSFM by substituting Queyranne's
algorithm for Stoer and Wagner's algorithm in Megiddo's framework.
This leads to the question of whether one could find faster algorithms
for $\PNB$ and $\Pmax$ for parametric SSFM by generalizing our results.

\section{Appendix}

\subsection{Geometric tools}\label{sec:Pmax_geometric}

A classical problem in computational geometry called
\emph{point location in arrangements} (PLA) is useful to our algorithm. PLA has been widely used in various contexts such as linear programming~\cite{AST93,MS92} or parametric optimization~\cite{CM93,Tok01}. For more details, see~\cite[Chapter 5]{Mul94}.

Given a simplex $P$, an arrangement $A(\mathcal{H})$ formed by a set $\mathcal{H}$
of hyperplanes in $\mathbb{R}^d$, let $A(\mathcal{H})\cap P$ denote the restriction of the arrangement $A(\mathcal{H})$ to $P$. The goal of PLA is to construct a data structure in order to quickly locate a cell of $A(\mathcal{H})\cap P$ containing an unknown target value $\bar{\mu}$.  Solving PLA requires the explicit construction of the arrangement $A(\mathcal{H})$ which can be done in an excessive $O(|\mathcal{H}|^d)$ running time~\cite[Theorem 6.1.2]{Mul94}.  For our purposes, it is sufficient to solve the following simpler form of PLA. 

\begin{enumerate}
	\item[$\Preg(\mathcal{H},P,\bar{\mu})$] Given a simplex $P$, a set $\mathcal{H}$ of hyperplanes in
	$\mathbb{R}^d$, and a target value $\bar{\mu}$, locate a simplex $R\subseteq A(\mathcal{H})\cap P$
	containing a target and unknown value $\bar{\mu}$.
\end{enumerate}

\noindent  Cohen and Megiddo~\cite{CM93} consider the problem $Max(f)$ of maximizing a concave function $f: \mathbb{R}^d \rightarrow \mathbb{R}$ with fixed dimension $d$ and give, under some conditions, a polynomial time algorithm. This algorithm also uses problem $\Preg(\mathcal{H},P,\bar{\mu})$ as a subroutine, where in this context the target value $\bar{\mu}$ is the optimal value of $Max(f)$. Let $T(d)$ denote the time required to solve $Max(f)$ with $d$ parameters and $T(0)$ denote the running time of evaluating $f$ at any value in $\mathbb{R}^d$. The authors solve $\Preg(\mathcal{H},P,\bar{\mu})$ recursively using multidimensional parametric search technique. See also \cite{Clar87,CF90,Tok01}.

\begin{lemma}\label{lem:geometric}
	Given a simplex $P$, a set $\mathcal{H}$ of hyperplanes in $\mathbb{R}^d$, and a target and unknown value $\bar{\mu}$, $\Preg(\mathcal{H},P,\bar{\mu})$ can be solved in $O(\log (|\mathcal{H}|)T(d-1) + |\mathcal{H}|)$ time.
\end{lemma}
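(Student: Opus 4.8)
The plan is to implement $\Preg(\mathcal{H},P,\bar\mu)$ as a multidimensional parametric search in the style of Cohen and Megiddo~\cite{CM93} (see also~\cite{Tok01,Clar87,CF90}), recursing on the dimension $d$. The basic primitive I would use is a \emph{parametric test}: given a single hyperplane $H\subseteq\mathbb{R}^d$, decide whether $\bar\mu$ lies in the open halfspace on one side of $H$, in the open halfspace on the other side, or on $H$ itself. Since $\bar\mu$ is the (unknown) maximizer of a concave function, such a test is answered by maximizing that concave function over the $(d-1)$-dimensional polytope $H\cap P$ and reading off, from the optimal point and its subgradient, on which side of $H$ the unrestricted maximizer lies; this is one instance of the same optimization problem in dimension $d-1$, costing $T(d-1)$ time. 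Note that resolving $\bar\mu$ against \emph{all} hyperplanes of $\mathcal{H}$ pins down the cell (or lower-dimensional face) of $A(\mathcal{H})\cap P$ that contains $\bar\mu$, after which a simplex $R$ containing $\bar\mu$ inside that cell can be produced by linear programming in fixed dimension~\cite{Meg84} in $O(|\mathcal{H}|)$ time. So the real content of the lemma is to bring the number of parametric tests down from the trivial $|\mathcal{H}|$ to $O(\log|\mathcal{H}|)$.

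To get this speedup I would process $\mathcal{H}$ in rounds, each round fixing the position of $\bar\mu$ relative to a constant fraction of the still-unresolved hyperplanes using only $O(1)$ parametric tests plus work linear in the number of unresolved hyperplanes. Fix a constant $\rho>1$ and, inside the current simplex, build a $(1/\rho)$-cutting of the current hyperplane set: a subdivision into $O(\rho^d)=O(1)$ simplicial cells, each crossed by at most a $1/\rho$ fraction of the current hyperplanes; for constant $\rho$ this can be done in linear time~\cite{CF90,Clar87}. With $O(1)$ parametric tests against the bounding hyperplanes of these cells, locate the cell $\sigma$ containing $\bar\mu$. Every current hyperplane not crossing the interior of $\sigma$ then has a determined side with respect to $\bar\mu$ and is discarded, and I recurse inside $\sigma$ on the remaining $\le|\mathcal{H}_{\mathrm{cur}}|/\rho$ hyperplanes (together with the new facets cutting out $\sigma$), handling the degenerate case in which $\bar\mu$ lies exactly on some queried hyperplane by the usual branch of a parametric test, which simply descends to the appropriate lower-dimensional face.

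Summing over rounds then gives the bound: there are $O(\log_\rho|\mathcal{H}|)=O(\log|\mathcal{H}|)$ rounds; each round performs $O(1)$ parametric tests at $T(d-1)$ time each, for a total of $O(\log|\mathcal{H}|)\,T(d-1)$; the additional per-round work forms a geometric series $\sum_j O(|\mathcal{H}|/\rho^{\,j})=O(|\mathcal{H}|)$; and the final extraction of $R$ costs $O(|\mathcal{H}|)$. Altogether this yields the claimed $O\big(\log(|\mathcal{H}|)\,T(d-1)+|\mathcal{H}|\big)$ time. The hard part, which I would cite rather than reprove, is the combinatorial-geometry primitive inside a round --- the linear-time construction of a constant-size cutting (equivalently, the multidimensional median search of~\cite{CM93}) together with the bookkeeping ensuring that resolved hyperplanes are never revisited and that the recursion terminates correctly when $\bar\mu$ lands on one or more hyperplanes; these facts are standard but delicate, and I would lean on~\cite{CM93,Clar87,CF90,Mul94} for them.
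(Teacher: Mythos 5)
Your proposal is correct and matches the paper's intent: the paper states this lemma without giving a proof, simply attributing it to Cohen and Megiddo's multidimensional parametric search (with pointers to \cite{Clar87,CF90,Tok01}), and your sketch --- resolving the target against hyperplanes via $T(d-1)$-time parametric tests, and using constant-size cuttings to dispose of a constant fraction of the hyperplanes per round so that only $O(\log|\mathcal{H}|)$ tests and $O(|\mathcal{H}|)$ additional work are needed --- is precisely the standard argument behind that citation. The one thing worth keeping explicit, as you do, is the degenerate branch where $\bar{\mu}$ lies on a queried hyperplane, since the paper later applies the lemma at points ($\mu^*$) that are vertices of the arrangement.
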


\subsection{Parametrized Stoer and Wagner's algorithm}\label{sec:Pmax_geometric}

In this appendix we discuss an application of the standard technique of Megiddo's parametric searching method~\cite{Meg79,Meg83} to global minimum cut in the context of Stoer and Wagner's algorithm (SW)~\cite{SW97}. Let us first recap the SW algorithm. For a fixed value $\bar{\mu}$ of the parameter $\mu$, the SW algorithm computes a \emph{Maximum Adjacency} (MA) ordering $(v_1,\ldots,v_{n})$ of the nodes.  The algorithm starts with an
arbitrary node $v_1$ and for each $i \in \{2,\ldots,n\}$ 
adds node $v_i \in V \setminus \{v_1,\ldots,v_{i-1}\}$ with maximum connection cost $f^{iv}(\bar{\mu}) := \sum_{j < i}\sum_{e \in E: e=(v_j,v)} c_{\bar{\mu}}(e)$ of all edges between $v \in
V \setminus \{v_1,\ldots,v_{i-1}\}$ and the set $\{v_1,\ldots,v_{i-1}\}$.  The key property is that cut $C^{(1)}=\{v_n\}$ is a
minimum cost cut separating $v_{n-1}$ and $v_n$. The SW algorithm stores this cut as a candidate for a
global min cut and merges nodes $v_{n-1}$ and $v_n$. The pair $(v_{n-1},v_n)$ is called a \textit{pendant pair}. If a global min cut separates $v_{n-1}$ and $v_n$ then
$C^{(1)}$ is an optimal global cut; otherwise $v_{n-1}$ and $v_n$ must be on the same side of any global minimum cut and thus this merging does
not destroy any global minimum cut. This process continues generating a sequence $C^{(1)},\ldots,C^{(n-2)}$ of candidate cuts (singletons in the contracted graphs) until $i = n-1$. At this step, the graph
contains only two vertices which give the final stored
candidate $C^{(n-1)}$.  The best cut in the set $\mathcal{S} = \{C^{(1)},\ldots,C^{(n-1)}\}$ is a global minimum cut.

\begin{definition}[\cite{CM93,Meg83,Meg84}]
	An algorithm $\mathcal{A}$ that computes function $Z$ for any $\mu \in M$ is called \textit{affine} if the operations that depend on $\mu$ used at each step are limited to additions, multiplications by constants, comparisons, and copies.
\end{definition}

Consider a target value $\bar{\mu} \in M$ which may correspond to the optimal value of problem $\PNB$ or $\Pmax$. We now show that the SW algorithm is affine.  Indeed, when $\mu$ is not fixed, the parametrized version of SW uses two types of comparisons. First, choosing the next node $v_i$ to add to an incomplete MA ordering $(v_1,\ldots,v_{i-1})$ requires computing the maximum of affine functions $f^{iv}(\bar{\mu})$ for all $v \in V \setminus \{v_1,\ldots,v_{i-1}\}$ for the target value $\bar{\mu}$. For this it suffices to compare affine functions $f^{iu}(\mu)$ and $f^{iv}(\mu)$ for all nodes $u$ and $v$ in $V \setminus \{v_1,\ldots,v_{i-1}\}$. The second type of comparison is to compute the minimum among of the costs $c_{\bar{\mu}}(C^{(i)})$ of candidate cuts for $i=1,\ldots,n-1$, which again amounts to comparisons between affine functions of $\mu$. 

In the following, we bound the running time of the parametrized MA
ordering. Recall that in contrast to problem $\Pmax$, problem $\PNB$ is a one-dimensional parametric optimization problem ($d=1$). Let $T(d)$ denote the time needed to solve the parametric optimization $\Pmax$ or $\PNB$ with $d$ parameters and $T(0) = O(n^2\log n + n m)$ denote the running time of computing a minimum (non-parametric) global cut using SW algorithm. For each iteration $i$ of SW algorithm, define $A(\mathcal{H}^i)$ the arrangement formed by the set $\mathcal{H}^i$ of hyperplanes $H^i_{u,v}=\{\mu \in \mathbb{R}^d: f^{iu}(\mu)=f^{iv}(\mu)\}$ for any pair of nodes $u,v \in V \setminus \{v_1,\ldots,v_{i-1}\}$ and define $R^0 = M$. Referring to the notation in Section~\ref{sec:Pmax} and using Lemma~\ref{lem:geometric}, problem $\Preg(\mathcal{H}^i,R^{i-1},\bar{\mu})$ can be solved in $O(\log (n)T(d-1) + n^2)$ time in order to compute a simplex $R^i \subseteq A(\mathcal{H}^i) \cap R^{i-1}$ containing the unknown target $\bar{\mu}$. By construction, functions $f^{iu}(\mu)$ are totally ordered in $R^i$. The next node $v_i \in V \setminus \{v_1,\ldots,v_{i-1}\}$ to be added to the incomplete MA ordering maximizes the connection cost $f^{iv}(\mu)$ for all $\mu \in R^i$ among the nodes $v \in V \setminus \{v_1,\ldots,v_{i-1}\}$. Therefore, node $v_i$ can be computed in $O(n)$ time. This shows that total running time of adding a node to an incomplete MA ordering is $O(\log (n)T(d-1) + n^2)$. After adding $O(n)$ vertices, the MA ordering can
be computed in $O(n\log (n) T(d-1) + n^3)$ time. The overall running time for computing the $O(n)$ MA orderings is $O(n^2\log (n)T(d-1) + n^4)$.

In the last step of the SW algorithm, we have a set $\mathcal{S} = \{C^{(1)},\ldots,C^{(n-1)}\}$ of candidate cuts. We now determine the target value $\bar{\mu}$ as follows. By construction, at least one cut $C^{(i)} \in \mathcal{S}$ corresponds to the target value $\bar{\mu}$, i.e., is minimum for edge costs $c_{\bar{\mu}}(e)$. By the correctness of the
algorithm, at least a solution in $\mathcal{S}$ is optimal for edge costs
$c_\mu(e)$ for any $\mu \in R^{n-1}$. This yields a simple approach for
solving $\PNB$. Compute the minimum intersection
point $\bar{\lambda}$ of function $Z(\mu^0)+\lambda Z'(\mu^0,\nu)$ and the $O(n)$ 
functions $c_{\mu^0+ \lambda v}(C^{(i)})$ for all $C^{(i)} \in
\mathcal{S}$ such that $\bar{\lambda} >0$. If such a point exists,
then the next breakpoint $\mu^\NB  =  \mu^0+ \bar{\lambda} v$. Otherwise, the next breakpoint does not exist.

For problem $\Pmax$, observe that $\mu^*$ is a vertex of the arrangement $A(\mathcal{H}^*)$ formed by the set $\mathcal{H}^*$ of hyperplanes $H^{ij}=\{\mu \in \mathbb{R}^d: c_\mu(C^{(i)})=c_\mu(C^{(j)})\}$ for any pair of cuts $C^{(i)},C^{(j)} \in \mathcal{S}$. Therefore, $\mu^*$ is a vertex of any simplex containing it and included in a facet of $A(\mathcal{H}^*)$. By Lemma~\ref{lem:geometric}, a simplex $R^* \subseteq A(\mathcal{H}^*) \cap R^{n-1}$ with a vertex $\mu^*$ can be computed in $O(\log (n)T(d-1) + n^2)$ time. 

The following proposition summarizes the running time for solving problems $\Pmax$ and $\PNB$ by this approach.  

\begin{proposition}\label{prop:meggido}
	Megiddo's parametric searching method combined with the SW algorithm solves problem $\PNB$ in $O(n^{5}\log (n))$ time and $\Pmax$ in $O(n^{2d+3}\log^d (n))$ time. 
\end{proposition}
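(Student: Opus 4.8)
The plan is to assemble the facts established in this appendix into a single recurrence on the number $d$ of parameters and then solve it. Let $T(d)$ denote the running time of the Megiddo--SW procedure on a $d$-parameter instance, with base case $T(0)=O(n^2\log n+nm)$ for one non-parametric run of SW. The two ingredients are already in hand: the parametrized SW algorithm is \emph{affine}, since its only $\mu$-dependent operations are comparisons between the affine connection costs $f^{iu}(\mu)$ and $f^{iv}(\mu)$ used to extend an incomplete MA ordering, together with comparisons among the candidate cut costs $c_\mu(C^{(i)})$; and the Cohen--Megiddo multidimensional parametric search resolves each such batch of comparisons by one call to $\Preg$, whose cost is bounded by \LEMMA{geometric}.

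First I would record correctness. Running the affine SW under Megiddo's simulation on the unknown target $\bar\mu$ (which is $\mu^*$ for $\Pmax$, and $\mu^\NB$, i.e.\ the case $d=1$, for $\PNB$) produces at termination a simplex $R^{n-1}\ni\bar\mu$ together with the candidate set $\mathcal S=\{C^{(1)},\dots,C^{(n-1)}\}$; by correctness of SW, at least one member of $\mathcal S$ is a global minimum cut for $c_\mu$ simultaneously at every $\mu\in R^{n-1}$. For $\Pmax$ this means $\mu^*$ is a vertex of the arrangement $A(\mathcal H^*)$ of the $O(n^2)$ bisecting hyperplanes $H^{ij}$ of pairs of cuts in $\mathcal S$, so one further $\Preg$ call on $R^{n-1}$ pins it down; for $\PNB$ it suffices to take the smallest positive $\bar\lambda$ at which $L(\lambda)=c_{\mu^0+\lambda\nu}(C^{(i)})$ for some $i$, a direct $O(n)$ computation over the $O(n)$ candidate lines.

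Next I would set up the recurrence. One step of an MA ordering needs the total order of the $O(n^2)$ affine functions $f^{iv}$ on a subregion still containing $\bar\mu$; by \LEMMA{geometric} a single $\Preg$ call with $O(n^2)$ hyperplanes delivers this (and the refined simplex $R^i$) in $O(\log n\cdot T(d-1)+n^2)$ time, after which the maximizing node is read off in $O(n)$ time. Hence one MA ordering costs $O(n\log n\cdot T(d-1)+n^3)$, the $O(n)$ orderings cost $O(n^2\log n\cdot T(d-1)+n^4)$, and the final extraction step (another $\Preg$ on $O(n^2)$ hyperplanes for $\Pmax$, or the $O(n)$ intersection computation for $\PNB$) is absorbed, so $T(d)=O(n^2\log n\cdot T(d-1)+n^4)$.

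Finally I would solve it. Unrolling gives $T(d)=O\!\left((n^2\log n)^d\,T(0)+n^4\sum_{k=0}^{d-1}(n^2\log n)^k\right)$; the geometric sum is dominated by its top term $(n^2\log n)^{d-1}n^4=n^{2d+2}\log^{d-1}n$, and substituting $T(0)=O(n^2\log n+nm)$ with the worst-case bound $m\le n^2$ makes $(n^2\log n)^d T(0)=O(n^{2d+3}\log^d n)$, which dominates because one extra factor $n$ outweighs one extra factor $\log n$. Thus $T(d)=O(n^{2d+3}\log^d n)$ for $\Pmax$ and, specializing to $d=1$, $T(1)=O(n^5\log n)$ for $\PNB$. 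The conceptual obstacle — showing that SW is affine and that it plugs into the Cohen--Megiddo recursion so as to give the per-call bound of \LEMMA{geometric} while keeping $\bar\mu$ inside the current region and preserving a valid candidate set — has already been dealt with above; what remains is only the arithmetic bookkeeping, namely tracking where $m$ is replaced by $n^2$ and checking that the final-step cost is always dominated by the cost of the MA orderings, both of which are routine.
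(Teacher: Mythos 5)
Your proposal is correct and follows essentially the same route as the paper: it establishes that the parametrized SW algorithm is affine, charges each MA-ordering step one $\Preg$ call at cost $O(\log n\cdot T(d-1)+n^2)$ via Lemma~\ref{lem:geometric}, arrives at the identical recurrence $T(d)=O(n^2\log n\cdot T(d-1)+n^4)$ with $T(0)=O(n^2\log n+nm)$, and unrolls it to $O(n^{2d+3}\log^d n)$, specializing to $O(n^5\log n)$ for $d=1$. The handling of the final extraction step (the extra $\Preg$ call for $\Pmax$, the $O(n)$ line-intersection computation for $\PNB$) also matches the paper's treatment.
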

\begin{proof}
	For problem $\PNB$, the overall running time is $T(1) = O(n^2\log (n) T(0) + n^4) = O(n^{5}\log (n))$ as claimed since $m = O(n^2)$. For problem $\Pmax$, the running time is given by the following recursive formula:
	
	\begin{align*}
	T(d) &= O(n^2\log (n) T(d-1) + n^4)\\
	&= O((n^2\log (n))^{d} T(0) + n^4 \sum_{i=0}^{d-1} (n^2\log n)^{i})\\
	&= O(n^{2d+3}\log^d (n)).
	\end{align*}
\end{proof}

\end{document}